\def \bP {\mathbb{P}}
\def \bE {\mathbb{E}}
\newtheorem{theorem}{Theorem}
\newtheorem{lemma}{Lemma}
\newtheorem{corollary}{Corollary}
\newtheorem{remark}{Remark}
\begin{document}
\title{Minimax Estimation of Discrete Distributions under $\ell_1$ Loss}
\author{Yanjun~Han,~\IEEEmembership{Student Member,~IEEE},~Jiantao~Jiao,~\IEEEmembership{Student Member,~IEEE}, and Tsachy~Weissman,~\IEEEmembership{Fellow,~IEEE}% <-this % stops a space
\thanks{Manuscript received Month 00, 0000; revised Month 00, 0000; accepted Month 00, 0000. Date of current version Month 00, 0000.
This work was supported in part by the Center for Science of Information (CSoI), an NSF Science and Technology Center, under grant agreement CCF-0939370. The material in this paper was presented in part at the 2014 IEEE International Symposium on
Information Theory, Honolulu, HI, USA. Copyright (c) 2014 IEEE. Personal use of this material is permitted. However, permission to use this material for any other purposes must be obtained from the IEEE by sending a request to pubs-permissions@ieee.org.
}
\thanks{Yanjun Han, Jiantao Jiao and Tsachy Weissman are with the Department of Electrical Engineering, Stanford University, CA, USA. Email: \{yjhan, jiantao, tsachy\}@stanford.edu}. % <-this % stops a space
\thanks{This work was supported in part by the NSF Center for Science of Information under grant agreement CCF-0939370.}
}
\maketitle

\begin{abstract}
We consider the problem of discrete distribution estimation under $\ell_1$ loss. We provide tight upper and lower bounds on the maximum risk of the empirical distribution (the maximum likelihood estimator), and the minimax risk in regimes where the support size $S$ may grow with the number of observations $n$. We show that among distributions with bounded entropy $H$, the asymptotic maximum risk for the empirical distribution is $2H/\ln n$, while the asymptotic minimax risk is $H/\ln n$. Moreover, we show that a hard-thresholding estimator oblivious to the unknown upper bound $H$, is essentially minimax. However, if we constrain the estimates to lie in the simplex of probability distributions, then the asymptotic minimax risk is again $2H/\ln n$. We draw connections between our work and the literature on density estimation, entropy estimation, total variation distance ($\ell_1$ divergence) estimation, joint distribution estimation in stochastic processes, normal mean estimation, and adaptive estimation.
\end{abstract}

\begin{IEEEkeywords}
Distribution estimation, entropy estimation, minimax risk, hard-thresholding, high dimensional statistics
\end{IEEEkeywords}
%\IEEEpeerreviewmaketitle

\section{Introduction and Main Results}
Given $n$ independent samples from an unknown discrete probability distribution $P=(p_1,p_2,\cdots,p_S)$, with \emph{unknown} support size $S$, we would like to estimate the distribution $P$ under $\ell_1$ loss. Equivalently, the problem is to estimate $P$ based on the Multinomal random vector $(X_1,X_2,\ldots,X_S)\sim \mathsf{Multi}(n;p_1,p_2,\ldots,p_S)$.

A natural estimator of $P$ is the Maximum Likelihood Estimator (MLE), which in this problem setting coincides with the empirical distribution $P_n$, where $P_n(i) = X_i/n$ is the number of occurrences of symbol $i$ in the sample divided by the sample size $n$. This paper is devoted to analyzing the performances of the MLE, and the minimax estimators in various regimes. Specifically, we focus on the following three regimes:

\begin{enumerate}
  \item Classical asymptotics: the dimension $S$ of the unknown parameter $P$ remains fixed, while the number of observations $n$ grows.
  \item High dimensional asymptotics: we let the support size $S$ and the number of observations $n$ grow together, characterize the scaling under which consistent estimation is feasible, and obtain the minimax rates.
  \item Infinite dimensional asymptotics: the distribution $P$ may have infinite support size, but is constrained to have bounded entropy $H(P)\le H$, where the entropy \cite{Shannon1948} is defined as
    \begin{align}
      H(P) \triangleq \sum_{i=1}^S -p_i\ln p_i.
    \end{align}
\end{enumerate}

We remark that results for the first regime follow from the well-developed theory of asymptotic statistics~\cite[Chap. 8]{Vandervaart2000}, and we include them here for completeness and comparison with other regimes. One motivation for considering the high dimensional and infinite dimensional asymptotics is that the modern era of big data gives rise to situations in which we cannot assume that the number of observations is much larger than the dimension of the unknown parameter. It is particularly true for the distribution estimation problem, e.g., the Wikipedia page on the Chinese characters showed that the number of Chinese sinograms is at least $80,000$. Meanwhile, for distributions with extremely large support sizes (such as the Chinese language), the number of frequent symbols are considerably smaller than the support size. This observation motivates the third regime, in which we focus on distributions with finite entropy, but possibly extremely large support sizes. Another key result that motivates the problem of discrete distribution estimation under bounded entropy constraint is Marton and Shields \cite{Marton1994entropy}, who essentially showed that the entropy rate dictates the difficulty in estimating discrete distributions under $\ell_1$ loss in stochastic processes.

We denote by $\mathcal{M}_S$ the set of all distributions of support size $S$. The $\ell_1$ loss for estimating $P$ using $Q$ is defined as
\begin{align}
  \|P-Q\|_1\triangleq \sum_{i=1}^S |p_i-q_i|,
\end{align}
where $Q$ is not necessarily a probability mass function. The risk function for an estimator $\hat{P}$ in estimating $P$ under $\ell_1$ loss is defined as
\begin{equation}
R(P; \hat{P}) \triangleq \bE_P \|\hat{P}-P\|_1,
\end{equation}
where the expectation is taken with respect to the measure $P$. The maximum $\ell_1$ risk of an estimator $\hat{P}$, and the minimax risk in estimating $P$ are respectively defined as
\begin{align}
R_{\text{maximum}}(\mathcal{P};\hat{P}) & \triangleq \sup_{P \in \mathcal{P}} R(P; \hat{P}) \\
R_{\text{minimax}}(\mathcal{P}) & \triangleq \inf_{\hat{P}} \sup_{P \in \mathcal{P}} R(P; \hat{P}),
\end{align}
where $\mathcal{P}$ is a given collection of probability measures $P$, and the infimum is taken over all estimators $\hat{P}$.

This paper is dedicated to investigating the maximum risk of the MLE $R_{\text{maximum}}(\mathcal{P};P_n)$ and the minimax risk $R_{\text{minimax}}(\mathcal{P})$ for various $\mathcal{P}$. There are good reasons for focusing on the $\ell_1$ loss, as we do in this paper. Other loss functions in distribution estimation, such as the squared error loss, have been extensively studied in a series of papers~\cite{Steinhaus1957problem,Trybula1958problem,Rutkowska1977minimax,Olkin1979admissible}, while less is known for the $\ell_1$ loss. For the squared error loss, the minimax estimator is unique and depends on the support size $S$ \cite[Pg. 349]{Lehmann1998theory}. Since the support size $S$ is unknown in our setting, this estimator is highly impractical. This fact partially motivates our focus on the $\ell_1$ loss, which turns out to bridge our understanding of both parametric and nonparametric models. The $\ell_1$ loss, being proportional to the variational distance, is often a more natural measure of discrepancy between distributions than the $\ell_2$ loss. Moreover, the $\ell_1$ loss in discrete distribution estimation is compatible with and is a degenerate case of the $L_1$ loss in density estimation, which is the only loss that satisfies certain natural properties \cite{Devroye1985nonparametric}.

All logarithms in this paper are assumed to be in the natural base.

\subsection{Main results}

We investigate the maximum risk of the MLE $R_{\text{maximum}}(\mathcal{P};P_n)$ and the minimax risk $R_{\text{minimax}}(\mathcal{P})$ in the aforementioned three different regimes separately.

Understanding $R_{\text{maximum}}(\mathcal{P};P_n) = \sup_{P \in \mathcal{P}} R(P;P_n)$ follows from an understanding of $R(P;P_n) = \bE_P \| P_n - P \|_1$. This problem can be decomposed into analyzing the Binomial mean absolute deviation defined as
\begin{equation}
\bE \Big | \frac{X}{n} - p \Big|,
\end{equation}
where $X\sim \mathsf{B}(n,p)$ follows a Binomial distribution. Complicated as it may seem, De Moivre obtained an explicit expression for this quantity. Diaconis and Zabell provided a nice historical account for De Moivre's discovery in \cite{Diaconis1991closed}.

Berend and Kontorovich \cite{Berend2013sharp} provided tight upper and lower bounds on the Binomial mean absolute deviation, and we summarize some key results in Lemma \ref{lem_bino} of the Appendix. A well-known result to recall first is the following.

\begin{theorem}\label{th_MLE_S}
The maximum $\ell_1$ risk of the empirical distribution $P_n$ satisfies
\begin{align}
  \sup_{P\in\mathcal{M}_S} \mathbb{E}_P\|P_n-P\|_1 \le \sqrt{\frac{S-1}{n}},
\end{align}
where $\mathcal{M}_S$ denotes the set of distributions with support size $S$.
\end{theorem}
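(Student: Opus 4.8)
The plan is to decompose the $\ell_1$ risk coordinatewise and control each term by a standard deviation. Writing $X_i \sim \mathsf{B}(n,p_i)$ for the $i$-th coordinate of the multinomial count vector, linearity of expectation gives
\[
\bE_P\|P_n - P\|_1 = \sum_{i=1}^S \bE_P\left|\frac{X_i}{n} - p_i\right|.
\]
For each summand I would apply Jensen's inequality (equivalently Cauchy--Schwarz against the constant $1$) to the centered variable $Y_i = X_i/n - p_i$, obtaining $\bE_P|Y_i| \le \sqrt{\bE_P Y_i^2}$; since $\bE_P[X_i/n] = p_i$, the right-hand side is exactly the standard deviation $\sqrt{p_i(1-p_i)/n}$.

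The second and final step is to sum these bounds and apply Cauchy--Schwarz once more, this time over the index $i$:
\[
\sum_{i=1}^S \sqrt{\frac{p_i(1-p_i)}{n}} = \frac{1}{\sqrt n}\sum_{i=1}^S \sqrt{p_i}\cdot\sqrt{1-p_i} \le \frac{1}{\sqrt n}\left(\sum_{i=1}^S p_i\right)^{1/2}\left(\sum_{i=1}^S (1-p_i)\right)^{1/2}.
\]
Invoking $\sum_{i=1}^S p_i = 1$ and $\sum_{i=1}^S (1-p_i) = S-1$ yields $\bE_P\|P_n-P\|_1 \le \sqrt{(S-1)/n}$ for every $P\in\mathcal{M}_S$, and taking the supremum over $P$ gives the statement.

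There is no genuine obstacle here: the only tools are linearity of expectation and two applications of Cauchy--Schwarz, and nothing about the binomial law beyond its first two moments is used. A sharper constant could be extracted by replacing the crude bound $\bE_P|Y_i| \le \sqrt{\bE_P Y_i^2}$ with De Moivre's exact expression for the binomial mean absolute deviation (cf.\ Lemma~\ref{lem_bino}), but for the inequality as stated the elementary route suffices.
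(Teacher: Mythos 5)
Your proof is correct and follows essentially the same route as the paper: a coordinatewise bound $\bE_P|X_i/n-p_i|\le\sqrt{p_i(1-p_i)/n}$ followed by summing over $i$. The only cosmetic difference is that you finish with Cauchy--Schwarz over the index, whereas the paper invokes concavity of $\sqrt{x(1-x)}$ (Jensen over $i$); both give the same bound $\sum_i\sqrt{p_i(1-p_i)}\le\sqrt{S-1}$.
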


In fact, a tighter upper bound on the worst-case $\ell_1$ risk of the empirical distribution $P_n$ has been obtained in \cite{kamath2015learning}:
\begin{align}\label{eq:mleupper_tight}
    \sup_{P\in\mathcal{M}_S} \mathbb{E}_P\|P_n-P\|_1 \le \sqrt{\frac{2(S-1)}{\pi n}} + \frac{2S^{\frac{1}{2}}(S-1)^{\frac{1}{4}}}{n^{\frac{3}{4}}}.
\end{align}

In the present paper we show that MLE is minimax rate-optimal in all the three regimes we consider, but possibly suboptimal in terms of constants in high dimensional and infinite dimensional settings.

\subsubsection{Classical Asymptotics}
The next corollary is an immediate result from Theorem \ref{th_MLE_S}.
\begin{corollary}\label{cor_MLE_sample_S}
The empirical distribution $P_n$ achieves the worst-case convergence rate $O(n^{-\frac{1}{2}})$. Specifically,
    \begin{align}
     \limsup_{n\to \infty} \sqrt{n}\cdot\sup_{P\in\mathcal{M}_S}\mathbb{E}_P\|P_n-P\|_1 \le \sqrt{S-1} < \infty.
\end{align}
\end{corollary}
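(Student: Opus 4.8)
The plan is to treat the inequality of Theorem~\ref{th_MLE_S} as a bound valid for each fixed sample size $n$, rescale it by $\sqrt{n}$, and then take the limit superior. First I would recall that Theorem~\ref{th_MLE_S} asserts, for every $n\ge 1$,
\begin{align}
\sup_{P\in\mathcal{M}_S}\mathbb{E}_P\|P_n-P\|_1 \le \sqrt{\frac{S-1}{n}}.
\end{align}
Multiplying through by the deterministic constant $\sqrt{n}>0$, which can be pulled outside the supremum over $P$, gives $\sqrt{n}\cdot\sup_{P\in\mathcal{M}_S}\mathbb{E}_P\|P_n-P\|_1 \le \sqrt{S-1}$ for every $n$. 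Since the right-hand side is a finite constant not depending on $n$, the sequence $\big(\sqrt{n}\cdot\sup_{P\in\mathcal{M}_S}\mathbb{E}_P\|P_n-P\|_1\big)_{n\ge 1}$ is bounded above by $\sqrt{S-1}$, and hence so is its $\limsup$; as $S<\infty$, the bound $\sqrt{S-1}<\infty$ follows, which is exactly the assertion of the corollary.

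There is essentially no obstacle here: the only thing to verify is that the estimate in Theorem~\ref{th_MLE_S} is uniform in $n$ rather than merely asymptotic, which is evident from its statement. If one wishes to track the best constant obtainable by this argument, the sharper bound~\eqref{eq:mleupper_tight} can be used instead: multiplying it by $\sqrt{n}$ yields $\sqrt{2(S-1)/\pi}+2S^{1/2}(S-1)^{1/4}n^{-1/4}$, whose limit superior equals $\sqrt{2(S-1)/\pi}$, and since $\sqrt{2/\pi}<1$ this is strictly smaller than $\sqrt{S-1}$. For the constant stated in the corollary, however, the elementary bound of Theorem~\ref{th_MLE_S} already suffices, so I would present the one-line rescaling argument above as the proof.
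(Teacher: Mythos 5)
Your argument is exactly the paper's: the corollary is stated there as an immediate consequence of Theorem~\ref{th_MLE_S}, obtained by multiplying the uniform-in-$n$ bound by $\sqrt{n}$ and taking the limit superior. The rescaling step and the remark about the sharper constant via~\eqref{eq:mleupper_tight} are both correct, so nothing further is needed.
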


Regarding the lower bound, the well-known H\'{a}jek-Le Cam local asymptotic minimax theorem \cite{Hajek1972local} (Theorem \ref{lem_LAM} in Appendix \ref{sec_app_A}) and corresponding achievability theorems~\cite[Lemma 8.14]{Vandervaart2000} show that the MLE is optimal (even in constants) in classical asymptotics. Concretely, one corollary of Theorem~\ref{lem_LAM} in the Appendix shows the following.

\begin{corollary}\label{cor_n_S}
Fixing $S\ge2$, we have
\begin{align}
  \liminf_{n\to \infty} \sqrt{n}\cdot\inf_{\hat{P}}\sup_{P\in\mathcal{M}_S}\mathbb{E}_P\|\hat{P}-P\|_1 \ge \sqrt{\frac{2(S-1)}{\pi}} > 0,
\end{align}
where the infimum is taken over all estimators.
\end{corollary}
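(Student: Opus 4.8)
The plan is to obtain the bound as a direct instance of the H\'ajek--Le Cam local asymptotic minimax theorem (Theorem~\ref{lem_LAM}), applied to the multinomial model localized at the uniform distribution. First I would parametrize a distribution with full support $S$ by $\theta=(p_1,\dots,p_{S-1})$, so that $P=\psi(\theta)=(\theta_1,\dots,\theta_{S-1},\,1-\sum_{i=1}^{S-1}\theta_i)$; the map $\psi$ is linear (hence everywhere differentiable), and the target loss is $\|\hat P-P\|_1=\ell(\hat P-P)$ with $\ell(z)=\sum_{i=1}^{S}|z_i|$, which is bowl-shaped (it is a norm, so its sublevel sets are convex and symmetric about the origin). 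Take $P_0=(1/S,\dots,1/S)$; since $P_0$ lies in the interior of the simplex, the multinomial family is differentiable in quadratic mean with finite, continuous Fisher information in a neighborhood of the corresponding $\theta_0$, so the hypotheses of Theorem~\ref{lem_LAM} hold there. For any finite $I\subset\mathbb R^{S-1}$ and all large $n$, the perturbed parameters $\theta_0+h/\sqrt n$, $h\in I$, stay in the interior and hence index distributions in $\mathcal M_S$, so $\sup_{h\in I}\sqrt n\,\bE_{P(\theta_0+h/\sqrt n)}\|\hat P-P(\theta_0+h/\sqrt n)\|_1\le \sqrt n\sup_{P\in\mathcal M_S}\bE_P\|\hat P-P\|_1$; taking $\sup_I$, then $\liminf_n$, and invoking Theorem~\ref{lem_LAM} yields, for every estimator sequence $\hat P$,
\[
 \liminf_{n\to\infty}\sqrt n\,\sup_{P\in\mathcal M_S}\bE_P\|\hat P-P\|_1\;\ge\;\bE\,\ell(Z),
\]
where $Z\sim\mathcal N(0,\Sigma)$ with $\Sigma=\dot\psi(\theta_0)\,I_{\theta_0}^{-1}\,\dot\psi(\theta_0)^{\top}$. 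A routine subsequence argument then promotes this to the same lower bound for $\liminf_{n}\inf_{\hat P}\sqrt n\,\sup_{P\in\mathcal M_S}\bE_P\|\hat P-P\|_1$.

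Next I would evaluate the right-hand side. Since $I_{\theta_0}^{-1}=\mathrm{diag}(\theta_0)-\theta_0\theta_0^{\top}$ is the covariance of the first $S-1$ coordinates of one multinomial draw and $\dot\psi(\theta_0)$ simply appends the negated coordinate sum, $\Sigma=\mathrm{diag}(P_0)-P_0P_0^{\top}$ is exactly the $S\times S$ multinomial covariance, i.e.\ the asymptotic covariance of $\sqrt n(P_n-P_0)$; in particular each marginal is $Z_i\sim\mathcal N\bigl(0,\,p_{0,i}(1-p_{0,i})\bigr)$. Using only linearity of expectation and $\bE|\mathcal N(0,\sigma^2)|=\sigma\sqrt{2/\pi}$,
\[
 \bE\,\ell(Z)=\sum_{i=1}^{S}\bE|Z_i|=\sqrt{\tfrac{2}{\pi}}\sum_{i=1}^{S}\sqrt{p_{0,i}(1-p_{0,i})}.
\]

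Finally, since each choice of interior $P_0$ furnishes a valid lower bound, I would maximize $\sum_{i=1}^S\sqrt{p_{0,i}(1-p_{0,i})}$ over the simplex: the map $x\mapsto\sqrt{x(1-x)}$ is concave on $[0,1]$, so by Jensen's inequality this sum is at most $S\sqrt{\bar p(1-\bar p)}$ with $\bar p=1/S$, with equality at the uniform distribution, where the sum equals $S\cdot\frac{\sqrt{S-1}}{S}=\sqrt{S-1}$. Substituting back gives $\bE\,\ell(Z)=\sqrt{2(S-1)/\pi}$, which is the claimed bound, and it is strictly positive for $S\ge 2$. I expect no genuine obstacle here: the only parts needing care are checking the regularity hypotheses of Theorem~\ref{lem_LAM} (quadratic-mean differentiability of the multinomial at an interior point, bowl-shapedness of $\ell$) and the bookkeeping that reduces the global supremum over $\mathcal M_S$ to the local supremum; the covariance identification and the Jensen step are routine.
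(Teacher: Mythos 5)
Your proposal is correct and follows essentially the same route as the paper: apply the H\'ajek--Le Cam local asymptotic minimax theorem (Theorem~\ref{lem_LAM}) at the uniform distribution with $\psi(\theta)=(\theta_1,\dots,\theta_{S-1},1-\sum_i\theta_i)$ and the $\ell_1$ loss, identify $\psi'(\theta)I_\theta^{-1}\psi'(\theta)^{\top}$ as the multinomial covariance, and compute $\bE\,\ell(Z)=\sqrt{2/\pi}\sum_i\sqrt{p_i(1-p_i)}=\sqrt{2(S-1)/\pi}$ at the uniform point. The only cosmetic differences are that the paper obtains $I_\theta^{-1}$ via the Woodbury identity rather than by recognizing it as the multinomial covariance, and your Jensen step showing uniform is the optimal localization point is an (unneeded but harmless) extra.
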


Note that the combination of (\ref{eq:mleupper_tight}) and Corollary \ref{cor_n_S} actually yields that the MLE is asymptotically minimax, and
\begin{equation}
\begin{split}
  & \lim_{n\to\infty} \sqrt{n}\cdot\inf_{\hat{P}}\sup_{P\in\mathcal{M}_S}\mathbb{E}_P\|\hat{P}-P\|_1 \\
  = & \lim_{n\to\infty} \sqrt{n}\cdot\sup_{P\in\mathcal{M}_S}\mathbb{E}_P\|P_n-P\|_1 =  \sqrt{\frac{2(S-1)}{\pi}},
\end{split}
\end{equation}
where the infimum is taken over all estimators. This result was also proved in \cite{kamath2015learning} via a different approach to obtain the exact constant for the lower bound in the classical asymptotics setting.

\subsubsection{High-dimensional Asymptotics}
Theorem \ref{th_MLE_S} also implies the following:
\begin{corollary}\label{cor_MLE_rate_S}
For $S=n/c$, the empirical distribution $P_n$ achieves the worst-case convergence rate $O(c^{-\frac{1}{2}})$, i.e.,
    \begin{align}
  \limsup_{c\to\infty}\sqrt{c}\cdot\limsup_{n\to \infty} \sup_{P\in\mathcal{M}_S}\mathbb{E}_P\|P_n-P\|_1 \le 1 < \infty.
\end{align}
\end{corollary}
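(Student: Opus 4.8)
The plan is to obtain the claim as an immediate consequence of Theorem~\ref{th_MLE_S}. By that theorem, for every $n$ and every support size $S$,
\begin{equation}
\sup_{P\in\mathcal{M}_S}\mathbb{E}_P\|P_n-P\|_1 \le \sqrt{\frac{S-1}{n}}.
\end{equation}
First I would substitute $S=n/c$ into this bound. Since $S-1 \le n/c$, this yields
\begin{equation}
\sup_{P\in\mathcal{M}_{S}}\mathbb{E}_P\|P_n-P\|_1 \le \sqrt{\frac{S-1}{n}} \le \sqrt{\frac{1}{c}} = \frac{1}{\sqrt{c}}.
\end{equation}
Crucially, the right-hand side does not depend on $n$.

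Next I would take the limits in the order dictated by the statement. Because the bound $1/\sqrt c$ is uniform in $n$, we get $\limsup_{n\to\infty}\sup_{P\in\mathcal{M}_S}\mathbb{E}_P\|P_n-P\|_1 \le 1/\sqrt c$; multiplying by $\sqrt c$ gives $\sqrt c\cdot\limsup_{n\to\infty}\sup_{P\in\mathcal{M}_S}\mathbb{E}_P\|P_n-P\|_1 \le 1$ for every $c$, and hence $\limsup_{c\to\infty}$ of this quantity is also at most $1$, which is exactly the asserted inequality.

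The only point that needs a word of care is the meaning of $S=n/c$ when $n/c\notin\mathbb{Z}$; I would simply read it as $S=\lfloor n/c\rfloor$ (or $\lceil n/c\rceil$), and note that in either case $S-1\le n/c$, so the chain of inequalities above is unaffected. Consequently there is no real obstacle: the corollary is a one-line deduction from the worst-case bound of Theorem~\ref{th_MLE_S}, and the nested $\limsup$'s cause no trouble precisely because the intermediate bound $1/\sqrt c$ is independent of $n$.
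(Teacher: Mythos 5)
Your proposal is correct and matches the paper's treatment: the paper presents this corollary as an immediate consequence of Theorem~\ref{th_MLE_S}, obtained exactly by substituting $S=n/c$ into the bound $\sqrt{(S-1)/n}$ to get the $n$-independent bound $1/\sqrt{c}$ and then taking the limits. Your extra remark on handling non-integer $n/c$ via $S=\lfloor n/c\rfloor$ is a harmless clarification that does not change the argument.
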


Now we show that $S=n/c$ is the \emph{critical scaling} in high dimensional asymptotics. In other words, if $n = o(S)$, then no estimator for the distribution $P$ is consistent under $\ell_1$ loss. This phenomenon has been observed in several papers, such as \cite{Berend2013sharp} and \cite{Diakonikolas2014beyond}, to name a few.

The following theorem presents a non-asymptotic minimax lower bound.

\begin{theorem}\label{th_lower_S}
  For any $\zeta\in(0,1]$, we have
\begin{align}
  &\inf_{\hat{P}}\sup_{P\in\mathcal{M}_S} \bE_P\|\hat{P}-P\|_1 \ge \frac{1}{8}\sqrt{\frac{eS}{(1+\zeta)n}}\mathbbm{1}\left(\frac{(1+\zeta)n}{S}> \frac{e}{16}\right) \nonumber\\
  & \qquad\qquad + \exp\left(-\frac{2(1+\zeta)n}{S}\right)\mathbbm{1}\left(\frac{(1+\zeta)n}{S}\le \frac{e}{16}\right)\nonumber \\
  & \qquad\qquad - \exp(-\frac{\zeta^2n}{24}) - 12\exp\left(-\frac{\zeta^2S}{32(\ln S)^2}\right),
\end{align}
where the infimum is taken over all estimators.
\end{theorem}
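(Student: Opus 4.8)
The plan is to combine a Poissonization of the sample size with a coordinatewise (product-prior) reduction, using two different priors for the two regimes $(1+\zeta)n/S\le e/16$ and $(1+\zeta)n/S>e/16$. First I would Poissonize: replace the fixed sample size $n$ by a random $N\sim\mathsf{Poisson}((1+\zeta)n)$, so that the symbol counts become \emph{independent} variables $X_i\sim\mathsf{Poisson}((1+\zeta)np_i)$ and the loss $\|\hat P-P\|_1=\sum_i|\hat p_i-p_i|$ decomposes coordinatewise. Since the Bayes risk with $m$ i.i.d.\ samples is nonincreasing in $m$ and is always at most $2$,
\begin{align}
\inf_{\hat P}\sup_{P\in\mathcal{M}_S}\bE_P\|\hat P-P\|_1\ \ge\ \bar R_{(1+\zeta)n}-2\,\bP[N<n],
\end{align}
where $\bar R_\lambda$ denotes the minimax risk in the $\mathsf{Poisson}(\lambda)$-ized model; a Chernoff bound for the Poisson lower tail bounds $2\bP[N<n]$ by $\exp(-\zeta^2n/24)$, the first subtracted term. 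It then remains to lower bound $\bar R_{(1+\zeta)n}$ by the Bayes risk of a well-chosen product prior on $(p_1,\dots,p_S)$.

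For $(1+\zeta)n/S\le e/16$ I would use a \emph{location} prior: group the symbols into $\lfloor S/2\rfloor$ pairs and, independently in each pair, place mass $2/S$ on a uniformly random one of the two coordinates (so $\sum_i p_i=1$). In the Poissonized model a pair receives no observation with probability $\exp(-2(1+\zeta)n/S)$, independently across pairs, and conditioned on that event the posterior on the orientation is uniform, so the pair forces an $\ell_1$ error of $2/S$ by the triangle inequality; summing over $\lfloor S/2\rfloor$ pairs gives $\bar R_{(1+\zeta)n}\ge\exp(-2(1+\zeta)n/S)$. For $(1+\zeta)n/S>e/16$ I would instead use a \emph{perturbation} prior: in each pair set $(p_{2j-1},p_{2j})=(1/S+\epsilon\sigma_j,\,1/S-\epsilon\sigma_j)$ with $\sigma_j=\pm1$ uniform and $\epsilon=\sqrt{e/\big(16(1+\zeta)nS\big)}$, the choice that makes the positivity constraint $\epsilon\le 1/S$ equivalent to exactly $(1+\zeta)n/S>e/16$. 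Recovering $\sigma_j$ is a binary test between two product-Poisson laws whose Hellinger affinity has the closed form $\exp\!\big(-(\sqrt{\mu_+}-\sqrt{\mu_-})^2\big)$ with $\mu_\pm=(1+\zeta)n(1/S\pm\epsilon)$; this choice of $\epsilon$ keeps the affinity above an absolute constant, so Le Cam's two-point inequality (equivalently, Assouad's lemma over the product) yields an $\ell_1$ error of order $\epsilon$ per pair, i.e.\ of order $S\epsilon\asymp\sqrt{eS/((1+\zeta)n)}$ overall, with prefactor $\tfrac18$ after careful bookkeeping.

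Finally I would pass from the independent-Poisson coordinates back to the exact multinomial law supported on the probability simplex: one conditions on a high-probability concentration event for the sums that control the normalization of the prior (and, in the $\mathsf{Poisson}\to$ multinomial step, the total counts in blocks of symbols), and on the complementary event the correction to the $\ell_1$ risk is lower order; a Chernoff/Hoeffding bound for the failure probability contributes the term $12\exp\!\big(-\zeta^2S/(32(\ln S)^2)\big)$. Adding the contributions of the two regimes yields the stated inequality.

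I expect the main obstacle to be the control of the absolute constants — matching the exact threshold $e/16$ and the prefactors $\tfrac18$, $2$, $\tfrac1{24}$, $\tfrac1{32}$ — which forces one to work with the sharp De Moivre / Berend--Kontorovich expressions for the Binomial (Poisson) mean absolute deviation from Lemma~\ref{lem_bino} and with the \emph{exact} Hellinger affinity of product-Poisson measures rather than its quadratic surrogate, and to optimize $\epsilon$ (and the supports of the priors) accordingly; the reduction scheme itself — Poissonization, coordinatewise decomposition, two-point/Assouad lower bounds — is standard.
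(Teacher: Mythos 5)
Your architecture is essentially the paper's: the paper also Poissonizes, also uses an i.i.d.\ two-point prior (uniform on $\{\tfrac{1-\eta}{S},\tfrac{1+\eta}{S}\}$ in each coordinate), reduces the Bayes risk coordinatewise to $\eta\bigl(1-d_{\text{TV}}\bigl(\mathsf{Poi}(\tfrac{n(1-\eta)}{S}),\mathsf{Poi}(\tfrac{n(1+\eta)}{S})\bigr)\bigr)$, and obtains \emph{both} regimes from the single choice $\eta=\min\{1,\tfrac14\sqrt{eS/n}\}$ via the Adell--Jodra bound $d_{\text{TV}}(\mathsf{Poi}(t),\mathsf{Poi}(t+x))\le\min\{1-e^{-x},\sqrt{2/e}(\sqrt{t+x}-\sqrt{t})\}$. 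Your genuine departures are (i) pairing coordinates so the prior sits exactly on the simplex, which makes the Poissonization/monotonicity step clean and would in fact \emph{remove} the need for the paper's approximate-simplex transfer (Lemmas \ref{lem_relate}, \ref{lem_bayes} and the Hoeffding step), i.e.\ the $12\exp(-\zeta^2S/(32(\ln S)^2))$ term --- your attribution of that term to a concentration event in your own scheme is spurious, though harmless since dropping a subtracted term only strengthens the bound; and (ii) a per-pair Le Cam/Hellinger test instead of the paper's per-coordinate ``sum of $\min$ of likelihoods'' bound. Two small bookkeeping points: with $\lfloor S/2\rfloor$ pairs of mass $2/S$ the sparse-regime bound is $\tfrac{S-1}{S}\exp(-2(1+\zeta)n/S)$ for odd $S$, and the comparison $2\bP[N<n]\le\exp(-\zeta^2n/24)$ from the Poisson lower tail needs either $\zeta^2 n$ bounded away from $0$ or the slack from the omitted term.

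The genuine gap is the constant $\tfrac18$ in the dense regime. Per pair your hypotheses are at $\ell_1$ distance $4\epsilon$, so Le Cam gives at least $2\epsilon(1-d_{\text{TV}})$ per pair and $S\epsilon(1-d_{\text{TV}})$ in total; with $\epsilon=\sqrt{e/(16(1+\zeta)nS)}$ this is $\tfrac14\sqrt{eS/((1+\zeta)n)}\cdot(1-d_{\text{TV}})$, so you need $1-d_{\text{TV}}\ge\tfrac12$ for the paired product-Poisson test. But the \emph{exact} Hellinger affinity of that test is $\rho=\exp(-(\sqrt{\mu_+}-\sqrt{\mu_-})^2)\ge e^{-e/8}\approx0.71$, and the standard conversion only gives $1-d_{\text{TV}}\ge1-\sqrt{1-\rho^2}\approx0.30$; optimizing over smaller $\epsilon$ caps the overall prefactor near $0.075$, and bounding the pair TV by twice the single-coordinate Adell--Jodra bound gives $\tfrac1{16}$. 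So ``exact Hellinger affinity plus careful bookkeeping'' does not reach $\tfrac18$: you either need a direct, sharper evaluation of the total variation of the paired test (numerically it does satisfy $d_{\text{TV}}\le\tfrac12$ at both extremes, but that must be proved), or you should drop the pairing in the dense regime and run the paper's per-coordinate argument --- lower bound each coordinate's Bayes risk by $\tfrac{\eta}{S}\sum_k\min\{\bP(\mathsf{Poi}(\tfrac{n(1-\eta)}{S})=k),\bP(\mathsf{Poi}(\tfrac{n(1+\eta)}{S})=k)\}=\tfrac{\eta}{S}(1-d_{\text{TV}})$ and apply Adell--Jodra with $\eta=\tfrac14\sqrt{eS/((1+\zeta)n)}$ --- at the price of an off-simplex prior and hence exactly the transfer lemmas and the $12\exp(-\zeta^2S/(32(\ln S)^2))$ correction appearing in the statement.
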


Theorem~\ref{th_lower_S} implies the following minimax lower bound in high dimensional asymptotics, if we take $\zeta\to0^+$.
\begin{corollary}\label{cor_lower_sample_S}
For any constant $c>0$, if $S=n/c$, the convergence rate of the maximum $\ell_1$ risk is $\Omega(c^{-\frac{1}{2}})$. Specifically,
\begin{align}
  \liminf_{c\to\infty}\sqrt{c}\cdot\liminf_{n\to \infty} \inf_{\hat{P}}\sup_{P\in\mathcal{M}_S}\mathbb{E}_P\|\hat{P}-P\|_1 \ge \frac{\sqrt{e}}{8} > 0,
\end{align}
where the infimum is taken over all estimators.
\end{corollary}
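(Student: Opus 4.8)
The plan is to obtain Corollary~\ref{cor_lower_sample_S} directly from Theorem~\ref{th_lower_S} by specializing to $S=n/c$ and then passing to the limit in the order $n\to\infty$, $c\to\infty$, $\zeta\to0^+$. First I would substitute $S=n/c$ (if $n/c$ is not an integer, replace it by $\lfloor n/c\rfloor$ throughout, which does not affect the asymptotics), so that the governing ratio becomes $(1+\zeta)n/S=(1+\zeta)c$. For each fixed $\zeta\in(0,1]$ and every $c$ large enough that $(1+\zeta)c>e/16$, the first indicator in Theorem~\ref{th_lower_S} equals $1$ and the second vanishes, so the bound collapses to
\begin{align}
\inf_{\hat P}\sup_{P\in\mathcal M_S}\bE_P\|\hat P-P\|_1 \ge \frac{1}{8}\sqrt{\frac{e}{(1+\zeta)c}} - \exp\!\left(-\frac{\zeta^2 n}{24}\right) - 12\exp\!\left(-\frac{\zeta^2 S}{32(\ln S)^2}\right).
\end{align}

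Next I would multiply through by $\sqrt{c}$ and let $n\to\infty$ with $c$ and $\zeta$ held fixed. Since $S=n/c\to\infty$, both exponential terms tend to $0$ even after multiplication by the fixed constant $\sqrt{c}$, which yields
\begin{align}
\liminf_{n\to\infty}\sqrt{c}\cdot\inf_{\hat P}\sup_{P\in\mathcal M_S}\bE_P\|\hat P-P\|_1 \ge \frac{1}{8}\sqrt{\frac{e}{1+\zeta}}.
\end{align}
The right-hand side is independent of $c$, so applying $\liminf_{c\to\infty}$ leaves it unchanged; finally, because $\zeta\in(0,1]$ is an arbitrary free parameter in Theorem~\ref{th_lower_S}, letting $\zeta\to0^+$ gives the claimed constant $\sqrt{e}/8>0$, and the $\Omega(c^{-1/2})$ rate statement follows from the same display.

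I do not anticipate a genuine obstacle: the argument is a limit computation layered on top of Theorem~\ref{th_lower_S}, where all the real work lies. The only points requiring a little care are (i) the integrality of $S$, handled by the floor replacement together with the fact that we only take $\liminf$ over $n$; (ii) verifying that for sufficiently large $c$ the two indicator conditions resolve as asserted (so that only the $\tfrac18\sqrt{e/((1+\zeta)c)}$ term survives); and (iii) keeping the order of limits straight so that $\zeta$ is never forced to depend on $n$, which would break the vanishing of the exponential terms.
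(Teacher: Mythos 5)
Your proposal is correct and follows the same route the paper intends: Corollary~\ref{cor_lower_sample_S} is obtained by specializing Theorem~\ref{th_lower_S} to $S=n/c$, noting the exponential remainder terms vanish as $n\to\infty$ with $c,\zeta$ fixed, and then letting $\zeta\to0^+$. Your additional care about integrality of $S$, the indicator conditions, and the order of limits is exactly the bookkeeping the paper leaves implicit.
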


Corollaries~\ref{cor_MLE_rate_S} and \ref{cor_lower_sample_S} imply that MLE achieves the optimal convergence rate $\Theta(c^{-\frac{1}{2}})$ in high dimensional linear scaling.

\subsubsection{Infinite-dimensional Asymptotics}
The performance of MLE in the regime of bounded entropy is characterized in the following theorem.
\begin{theorem}\label{th_MLE_H}
  The empirical distribution $P_n$ satisfies that, for any $H>0$ and $\eta>1$,
    \begin{align}
       \sup_{P: H(P)\le H} \mathbb{E}_P\|P_n-P\|_1 \le \frac{2H}{\ln n-2\eta\ln\ln n} + \frac{1}{(\ln n)^{\eta}}.
\end{align}
Further, for any $c\in(0,1)$ and $n>\max\{(1-c)^{-\frac{1}{1-c}},e^{H}\}$,
    \begin{align}
       \sup_{P: H(P)\le H} \mathbb{E}_P\|P_n-P\|_1 \ge \frac{2cH}{\ln n}\left(1-\left((1-c)n\right)^{-\frac{1}{c}}\right)^n.
\end{align}
\end{theorem}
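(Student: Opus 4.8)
The plan for the upper bound is to reduce $\mathbb{E}_P\|P_n-P\|_1$ to one-dimensional Binomial mean absolute deviations. By linearity, and since each count is marginally $X_i\sim\mathsf{B}(n,p_i)$, we have $\mathbb{E}_P\|P_n-P\|_1=\sum_i\mathbb{E}|X_i/n-p_i|$, and Lemma~\ref{lem_bino} --- or simply $\mathbb{E}|X_i/n-p_i|\le\sqrt{p_i(1-p_i)/n}$ by Jensen together with $\mathbb{E}|X_i/n-p_i|\le\mathbb{E}[X_i/n]+p_i=2p_i$ --- gives $\mathbb{E}|X_i/n-p_i|\le\min\{2p_i,\sqrt{p_i/n}\}$ for $p_i\le\tfrac12$. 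The heart of the argument is the pointwise inequality that, for $n$ large enough (so that $\ln n>2\eta\ln\ln n$) and all $p\in(0,\tfrac12]$,
\begin{equation}
\min\Big\{2p,\ \sqrt{p/n}\Big\}\ \le\ \frac{2(-p\ln p)}{\ln n-2\eta\ln\ln n}.
\end{equation}
For $p\le\tfrac1{4n}$ it is immediate: $2p=2(-p\ln p)/\ln(1/p)$ and $\ln(1/p)\ge\ln(4n)\ge\ln n-2\eta\ln\ln n$. For $\tfrac1{4n}<p\le\tfrac12$, substituting $p=e^{-s}$ with $s\in[\ln 2,\ln(4n))$ reduces the claim with $\sqrt{p/n}$ to $se^{-s/2}\ge(\ln n-2\eta\ln\ln n)/(2\sqrt n)$, and since $g(s)=se^{-s/2}$ increases on $(0,2)$ and decreases on $(2,\infty)$, its minimum over that interval is attained at an endpoint and, for $n$ large, equals $\ln(4n)/(2\sqrt n)\ge(\ln n-2\eta\ln\ln n)/(2\sqrt n)$.

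Granting this, fix $P$ with $H(P)\le H$. At most one coordinate, say coordinate $1$, has $p_1>\tfrac12$, and then $\mathbb{E}|X_1/n-p_1|\le\sqrt{p_1(1-p_1)/n}\le 1/(2\sqrt n)\le(\ln n)^{-\eta}$ for $n$ large; all other coordinates satisfy $p_i\le\tfrac12$, so
\begin{equation}
\sum_{i\ge 2}\mathbb{E}|X_i/n-p_i|\ \le\ \frac{2}{\ln n-2\eta\ln\ln n}\sum_{i\ge 2}(-p_i\ln p_i)\ \le\ \frac{2H}{\ln n-2\eta\ln\ln n}.
\end{equation}
Summing the two contributions gives the stated upper bound (if all $p_i\le\tfrac12$ the display already suffices and the $(\ln n)^{-\eta}$ term is not needed).

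For the lower bound I would use a single near-extremal $P^\ast$. Put $q=((1-c)n)^{-1/c}$; the hypothesis $n>(1-c)^{-1/(1-c)}$ is precisely what forces $nq\le1$, and $n>e^{H}$ gives $cH<\ln n$. Let $P^\ast$ have $k$ atoms of common mass $q$ and one atom of mass $1-kq$, with $k$ chosen (see below) so that $kq\ge cH/\ln n$ while $H(P^\ast)\le H$. For $X\sim\mathsf{B}(n,q)$ with $nq\le1$, De Moivre's closed form (Lemma~\ref{lem_bino}) --- equivalently, $\mathbb{E}|X-nq|=2\,\mathbb{E}[(nq-X)_+]=2nq\,\mathbb{P}(X=0)$ since $nq-X\le0$ on $\{X\ge1\}$ --- gives $\mathbb{E}|X/n-q|=2q(1-q)^n$. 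Using that each count is marginally Binomial and discarding the nonnegative contribution of the large atom,
\begin{equation}
\mathbb{E}_{P^\ast}\|P_n-P^\ast\|_1\ \ge\ k\cdot 2q(1-q)^n\ =\ 2kq\,(1-q)^n\ \ge\ \frac{2cH}{\ln n}\big(1-((1-c)n)^{-1/c}\big)^n.
\end{equation}

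It remains to pick $k$ with $H(P^\ast)\le H$. With $r=1-kq$ and the elementary bound $-r\ln r\le 1-r=kq$, $H(P^\ast)=k(-q\ln q)+(-r\ln r)\le kq(\ln(1/q)+1)$; since $\ln(1/q)=\tfrac1c\ln((1-c)n)$ and $\ln(1-c)+c\le0$, we get $\ln(1/q)+1\le\tfrac1c\ln n$, hence $H(P^\ast)\le kq\ln n/c$, so every integer $k$ with $kq\le cH/\ln n$ is admissible. The constant slack in $\ln(1/q)+1\le\tfrac1c\ln n$ (coming from $\ln(1-c)+c<0$) leaves a window of admissible values of $kq$ of width of order $(\ln n)^{-2}$, which exceeds the rounding step $q=n^{-\Theta(1)}$ once $n$ is large, so the least integer $k$ with $kq\ge cH/\ln n$ is admissible; this bookkeeping is the main obstacle. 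The underlying tension is that $q$ must be small enough for $(1-q)^n$ to stay near $1$ (forcing $q\ll 1/n$, i.e.\ $c<1$) yet large enough that $k\ge1$ equiprobable atoms already consume asymptotically all of the entropy budget $H$; the exponent $1/c$ and the side condition $n>(1-c)^{-1/(1-c)}$ reconcile the two cleanly. On the upper-bound side, the only real work is the pointwise inequality above together with isolating the unique coordinate of mass exceeding $\tfrac12$.
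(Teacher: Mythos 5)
Your proof is correct and rests on the same two pillars as the paper's: for the upper bound, the per-coordinate reduction to the Binomial mean absolute deviation with the bound $\min\{2p,\sqrt{p/n}\}$ compared against $-p\ln p$; for the lower bound, a distribution with many equal atoms of mass below $1/n$ plus one heavy atom, together with the identity $\mathbb{E}|X/n-p|=2p(1-p)^n$. The execution differs in two spots, and both of your variants check out. On the upper bound, the paper splits coordinates at the threshold $(\ln n)^{2\eta}/n$, uses $2p_i\le 2(-p_i\ln p_i)/(\ln n-2\eta\ln\ln n)$ below it, and controls the coordinates above it by Cauchy--Schwarz (there are at most $n/(\ln n)^{2\eta}$ of them), which is where its $(\ln n)^{-\eta}$ term comes from; you instead prove the pointwise domination $\min\{2p,\sqrt{p/n}\}\le 2(-p\ln p)/(\ln n-2\eta\ln\ln n)$ for all $p\le 1/2$, so the entropy term absorbs everything except the at most one coordinate exceeding $1/2$. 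Your ``$n$ large'' caveats are harmless: the condition $\ln n>2\eta\ln\ln n$ is equivalent to $\sqrt{n}>(\ln n)^{\eta}$, which both validates $1/(2\sqrt n)\le(\ln n)^{-\eta}$ and is exactly the regime in which the stated bound is non-vacuous, so nothing is lost relative to the paper. On the lower bound, the paper fixes $\delta=cH/\ln n$, solves the entropy equation for $S'$, and assumes ``without loss of generality'' that $S'$ is an integer, whereas you fix the atom mass $q=((1-c)n)^{-1/c}$ and treat integrality explicitly via the slack $\ln(1-c)+c<0$; this is more honest, though as you note it only produces an admissible integer $k$ once $n$ is large enough, which is marginally weaker than the theorem's explicit range $n>\max\{(1-c)^{-1/(1-c)},e^{H}\}$ --- a blemish no worse than the paper's own WLOG-integer step, so I would not count it as a gap.
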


The next corollary follows from Theorem~\ref{th_MLE_H} after taking $c\to1^-$.
\begin{corollary}\label{cor_rate_H}
For any $H>0$, the MLE $P_n$ satisfies
\begin{align}
  \lim_{n\to\infty} \frac{\ln n}{H}\cdot \sup_{P: H(P)\le H}\mathbb{E}_P\|P_n-P\|_1 = 2.
\end{align}
\end{corollary}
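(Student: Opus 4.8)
The plan is to obtain the corollary as a direct sandwich consequence of the two bounds already established in Theorem~\ref{th_MLE_H}, performing the elementary asymptotic analysis of each bound as $n\to\infty$ and then optimizing over the free parameters $\eta$ and $c$.

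\textbf{Upper bound.} First I would fix an arbitrary $\eta>1$, multiply the first inequality of Theorem~\ref{th_MLE_H} through by $\ln n/H$, and obtain
\[
\frac{\ln n}{H}\sup_{P: H(P)\le H}\bE_P\|P_n-P\|_1 \le \frac{2\ln n}{\ln n - 2\eta\ln\ln n} + \frac{1}{H(\ln n)^{\eta-1}}.
\]
Since $\ln\ln n/\ln n\to 0$, the first term on the right equals $2/(1-2\eta\ln\ln n/\ln n)\to 2$; since $\eta>1$, the second term tends to $0$. Hence $\limsup_{n\to\infty}\frac{\ln n}{H}\sup_{P: H(P)\le H}\bE_P\|P_n-P\|_1\le 2$ (no optimization over $\eta$ is actually needed here).

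\textbf{Lower bound.} Next I would fix an arbitrary $c\in(0,1)$ and, for all $n$ large enough that $n>\max\{(1-c)^{-1/(1-c)},e^H\}$, multiply the second inequality of Theorem~\ref{th_MLE_H} by $\ln n/H$ to get
\[
\frac{\ln n}{H}\sup_{P: H(P)\le H}\bE_P\|P_n-P\|_1 \ge 2c\left(1 - ((1-c)n)^{-1/c}\right)^n.
\]
The key step is to verify that the bracketed factor converges to $1$. Taking logarithms and using $\ln(1-x)\sim -x$ as $x\to 0$, one finds $n\ln\left(1-((1-c)n)^{-1/c}\right)\sim -(1-c)^{-1/c}\, n^{1-1/c}$; since $c\in(0,1)$ forces the exponent $1-1/c$ to be strictly negative, this tends to $0$, so $\left(1-((1-c)n)^{-1/c}\right)^n\to 1$. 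Consequently $\liminf_{n\to\infty}\frac{\ln n}{H}\sup_{P: H(P)\le H}\bE_P\|P_n-P\|_1\ge 2c$, and letting $c\to 1^-$ gives the matching lower bound $\ge 2$.

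Combining the two displays yields $\lim_{n\to\infty}\frac{\ln n}{H}\sup_{P: H(P)\le H}\bE_P\|P_n-P\|_1 = 2$, as claimed. The argument is essentially bookkeeping; the only point that requires a moment's care is confirming that $n^{1-1/c}\to 0$, i.e. that raising a base slightly below $1$ to the $n$-th power still converges to $1$ because the base approaches $1$ polynomially fast in $n$ — which is precisely why Theorem~\ref{th_MLE_H} was stated with the polynomial correction $((1-c)n)^{-1/c}$ rather than a slower-decaying term.
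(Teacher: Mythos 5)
Your proposal is correct and follows the same route the paper intends: Corollary~\ref{cor_rate_H} is stated there as an immediate consequence of Theorem~\ref{th_MLE_H} with $c\to 1^-$, and your two-sided sandwich (fixing $\eta>1$ for the upper bound, verifying $\bigl(1-((1-c)n)^{-1/c}\bigr)^n\to 1$ via $n^{1-1/c}\to 0$ for the lower bound) is exactly the bookkeeping the paper leaves to the reader.
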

 It implies that we not only have obtained the $\Theta((\ln n)^{-1})$ convergence rate of the asymptotic $\ell_1$ risk of MLE, but also shown that the multiplicative constant is exactly $2H$. We note that this logarithmic convergence rate is really slow, implying that the sample size needs to be squared to reduce the maximum $\ell_1$ risk by a half. Also note that the maximum $\ell_1$ risk is proportional to the entropy $H$, thus the smaller the entropy of a distribution is known to be, the easier it is to estimate.

 However, given this slow rate $\Theta((\ln n)^{-1})$, it is of utmost importance to obtain estimators such that the corresponding multiplicative constant is as small as possible. We show that MLE does not achieve the optimal constant. In the following theorem, an essentially minimax estimator is explicitly constructed.
\begin{theorem}\label{th_lower_H}
 For any $\eta>1$, denote
\begin{align}
  \Delta_n \triangleq \frac{(\ln n)^{2\eta}}{n},
\end{align}
then for the hard-thresholding estimator defined as $\hat{P}^*(\mathbf{X}) = (g_n(X_1),g_n(X_2),\cdots,g_n(X_S))$ with
\begin{align}
g_n(X_i) = \frac{X_i}{n}\mathbbm{1}\left(\frac{X_i}{n}> e^2\Delta_n\right),
\end{align}
we have
\begin{align}\label{eq:est_optimal}
   &\sup_{P: H(P)\le H} \mathbb{E}_P\|\hat{P}^*-P\|_1 \le \frac{H}{\ln n-\ln(2e^2)-2\eta\ln\ln n} \nonumber\\
   &\qquad\qquad + (\ln n)^{-\eta} + n^{1-\frac{e^2}{4}}.
\end{align}
Moreover, for any $c\in(0,1)$ and $n\ge e^{H}$, we have
  \begin{align}
 \inf_{\hat{P}}\sup_{P: H(P)\le H} \mathbb{E}_P\|\hat{P}-P\|_1 \ge  \frac{cH}{\ln n}\cdot \left(1-n^{1-\frac{1}{c}}(1-c)^{-\frac{1}{c}}\right),
\end{align}
where the infimum is taken over all estimators.
\end{theorem}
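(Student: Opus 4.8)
The statement bundles two results, the upper bound for $\hat{P}^*$ and a matching minimax lower bound, and I would prove each separately. For the upper bound, set $T_n=e^2\Delta_n$ and use the exact coordinatewise identity $\mathbb{E}_P|g_n(X_i)-p_i|=A_i+B_i$ with $A_i=\mathbb{E}_P[\,|X_i/n-p_i|\,\mathbbm{1}(X_i/n>T_n)]$ and $B_i=p_i\,\mathbb{P}_P(X_i/n\le T_n)$, so $\mathbb{E}_P\|\hat{P}^*-P\|_1=\sum_i(A_i+B_i)$. I would partition the indices into $\mathcal{L}=\{i:p_i>2T_n\}$, $\mathcal{S}_1=\{i:\Delta_n<p_i\le 2T_n\}$ and $\mathcal{S}_2=\{i:p_i\le\Delta_n\}$. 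The leading term comes only from $\sum_{i\notin\mathcal{L}}B_i\le\sum_{p_i\le 2T_n}p_i$: since $p_i\le 2T_n$ forces $\ln(1/p_i)\ge\ln(1/(2T_n))=\ln n-\ln(2e^2)-2\eta\ln\ln n$, the constraint $\sum_i p_i\ln(1/p_i)\le H$ gives $\sum_{p_i\le 2T_n}p_i\le H/(\ln n-\ln(2e^2)-2\eta\ln\ln n)$, which is the first term verbatim.

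Everything else is lower order. For $A_i$ with $p_i>\Delta_n$, Jensen (or Lemma~\ref{lem_bino}) gives $A_i\le\mathbb{E}_P|X_i/n-p_i|\le\sqrt{p_i/n}=p_i/\sqrt{np_i}<p_i/(\ln n)^\eta$ because $np_i>n\Delta_n=(\ln n)^{2\eta}$; summing over $\mathcal{L}\cup\mathcal{S}_1$ and using $\sum_i p_i=1$ bounds these by $(\ln n)^{-\eta}$, the second term. For $i\in\mathcal{S}_2$ the event $\{X_i/n>T_n\}=\{X_i>e^2(\ln n)^{2\eta}\}$ is a factor-$\ge e^2$ deviation above the mean $np_i\le(\ln n)^{2\eta}$, and combining $\mathbb{P}(X_i\ge m)\le(enp_i/m)^m$ with $\mathbb{E}[X_i\mathbbm{1}(X_i\ge m)]=np_i\,\mathbb{P}(\mathrm{Bin}(n-1,p_i)\ge m-1)$ (needed to retain a factor $p_i$) makes $\sum_{\mathcal{S}_2}A_i$ super-polynomially small. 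Finally, for $i\in\mathcal{L}$ one has $T_n\le p_i/2$, so $B_i\le p_i\,\mathbb{P}(X_i\le np_i/2)\le p_i e^{-np_i/8}\le e^{-e^2(\ln n)^{2\eta}/4}\le n^{-e^2/4}$ (using $\eta>1$, hence $(\ln n)^{2\eta}\ge\ln n$), and $|\mathcal{L}|\le 1/(2T_n)\le n$ then gives $\sum_{\mathcal{L}}B_i\le n^{1-e^2/4}$. Summing the four pieces yields the bound for $n$ large enough that the displayed denominator is positive and the two negligible pieces fit inside $n^{1-e^2/4}$; otherwise it is trivial since $\|\hat{P}^*-P\|_1\le 2$.

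For the minimax lower bound I would construct a prior supported in $\{P:H(P)\le H\}$ and estimate the Bayes risk. Put $p=((1-c)n)^{-1/c}$, so $np=n^{1-1/c}(1-c)^{-1/c}$, and WLOG $np<1$ since otherwise the stated bound is $\le 0$; pick an integer $k$ with $cH/(p\ln n)\le k\le H/(p\ln(e/p))$, which exists for $n$ large because $c+\ln(1-c)<0$ makes the interval nonempty while $np<1$ forces $\ln(e/p)>1+\ln n>H$, hence $k\le 1/p$. Let $\theta\in\{0,1\}^k$ be uniform and define $P_\theta$ on $2k+1$ symbols: symbol $0$ always carries mass $1-kp$, and in each pair $\{2i-1,2i\}$ exactly one symbol has mass $p$ and the other mass $0$, chosen by $\theta_i$. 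Then $H(P_\theta)=kp\ln(1/p)+(1-kp)\ln\tfrac{1}{1-kp}\le kp\ln(e/p)\le H$ for \emph{every} $\theta$ (using $(1-x)\ln\tfrac{1}{1-x}\le x$), so all $P_\theta$ are admissible and the worst-case risk dominates the $\theta$-average. Since the $\ell_1$ loss is additive, it suffices to lower-bound each pair separately; for pair $i$ condition on $E_i=\{X_{2i-1}=X_{2i}=0\}$, which has probability $(1-p)^n$ under every $P_\theta$. On $E_i$ the likelihood of the whole sample is identical under $\theta_i=0$ and $\theta_i=1$ (the mass $p$ sits on a never-observed symbol while symbol $0$'s mass is frozen), so the posterior of $\theta_i$ is still $\mathrm{Bernoulli}(1/2)$; hence the conditional expected loss on $E_i$ of any estimator is $\tfrac12(|\hat p_{2i-1}|+|\hat p_{2i-1}-p|)+\tfrac12(|\hat p_{2i}|+|\hat p_{2i}-p|)\ge p$. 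Summing over the $k$ pairs, the Bayes risk is $\ge kp(1-p)^n\ge(cH/\ln n)(1-np)$ by Bernoulli's inequality and the choice of $k$, which is exactly the claim.

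The technically demanding part is the upper bound: the three cut levels $\Delta_n$ and $2T_n$ must be placed so that the entropy argument produces \emph{exactly} the $\ln n-\ln(2e^2)-2\eta\ln\ln n$ denominator while the Binomial-tail estimates for $\mathcal{L}$ and $\mathcal{S}_2$ stay inside $n^{1-e^2/4}$, and the observation that collapses the second term to a single $(\ln n)^{-\eta}$ is that $\sqrt{p_i/n}\le p_i/(\ln n)^\eta$ for \emph{all} coordinates with $p_i>\Delta_n$ at once. In the lower bound the only delicate point is engineering the prior — paired masses above a frozen heavy atom — so that $H(P_\theta)\le H$ holds for every $\theta$ while the posterior of each hidden bit stays exactly $\mathrm{Bernoulli}(1/2)$ on the uninformative event; that symmetry is what buys the sharp constant $1$ rather than $\tfrac12$ in front of $H/\ln n$.
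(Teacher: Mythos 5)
Your upper-bound argument is essentially the paper's: the same three-regime split at $\Delta_n$ and $2e^2\Delta_n$, the same entropy argument extracting $H/(\ln n-\ln(2e^2)-2\eta\ln\ln n)$ from the sub-threshold mass, Chernoff tails for the two spurious contributions, and the fluctuation term collapsing to $(\ln n)^{-\eta}$ (you use $\sqrt{p_i/n}\le p_i(\ln n)^{-\eta}$ with $\sum_i p_i=1$, the paper counts the symbols with $p_i>\Delta_n$ and gets $1/\sqrt{n\Delta_n}$ — same outcome); like the paper, you implicitly need $2e^2\Delta_n<1$ so that the displayed denominator is positive. The lower bound is where you genuinely depart. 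The paper fixes $\delta=cH/\ln n$, puts the uniform prior on all configurations with exactly $S'$ cells of mass $\delta/S'$ among $kS'$ cells plus a heavy atom $1-\delta$, identifies the Bayes estimator exactly (the posterior median is $0$ on unseen cells once $k\ge2$), and evaluates the Bayes risk as $\delta(1-\mathbb{E}N(\mathbf{X})/S')$; your construction is instead an Assouad-type paired prior (exactly one symbol per pair carries mass $p$), and you lower-bound the conditional risk on the event that the pair is unobserved, where the posterior of the hidden bit stays Bernoulli$(1/2)$ and the triangle inequality over the two coordinates forces loss at least $p$, recovering the full constant rather than $1/2$. Your route is more elementary — no explicit Bayes estimator is needed — whereas the paper's prior has the advantage that the identical construction, with the simplex constraint imposed, also yields the factor-$2$ bound of Theorem~\ref{th_lower_H_mle}. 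One small caveat: you need an integer $k$ in $[cH/(p\ln n),\,H/(p\ln(e/p))]$ and argue its existence only for large $n$, while the theorem is stated for every $n\ge e^H$; to close this, take $k=\lfloor H/(p\ln(e/p))\rfloor$ and absorb the resulting additive slack $p<1/n$, which is the same order of informality as the paper's own ``assume $S'$ is an integer'' step.
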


Theorem \ref{th_lower_H} presents both a non-asymptotic achievable maximum $\ell_1$ risk and a non-asymptotic lower bound of the minimax $\ell_1$ risk, and it is straightforward to verify that the upper bound and lower bound coincide asymptotically by choosing $c\to1^-$. As a result, the asymptotic minimax $\ell_1$ risk is characterized in the following corollary.

\begin{corollary}\label{cor_lower_H}
For any $H>0$, the asymptotic minimax risk is $\frac{H}{\ln n}$, i.e.,
\begin{align}
  \lim_{n\to\infty} \frac{\ln n}{H}\cdot \inf_{\hat{P}}\sup_{P: H(P)\le H}\mathbb{E}_P\|\hat{P}-P\|_1 = 1,
\end{align}
and the estimator $\hat{P}$ in Theorem \ref{th_lower_H} is asymptotically minimax.
\end{corollary}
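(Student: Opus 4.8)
The plan is to obtain the corollary as a direct limiting consequence of the two non-asymptotic bounds already established in Theorem~\ref{th_lower_H}, so the argument amounts to a limit computation with careful bookkeeping of the error terms; no genuinely new idea is needed.

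First I would establish $\limsup_{n\to\infty}\frac{\ln n}{H}\cdot\inf_{\hat{P}}\sup_{P:H(P)\le H}\mathbb{E}_P\|\hat{P}-P\|_1 \le 1$. Since the infimum over all estimators is bounded above by the $\ell_1$ risk of the explicit hard-thresholding estimator $\hat{P}^*$, I would fix any $\eta>1$, multiply the achievability bound of Theorem~\ref{th_lower_H} through by $\ln n/H$, and verify term by term that
\begin{align}
  \frac{\ln n}{\ln n-\ln(2e^2)-2\eta\ln\ln n}\to 1,\quad \frac{\ln n}{H(\ln n)^{\eta}}\to 0,\quad \frac{\ln n}{H}\,n^{1-\frac{e^2}{4}}\to 0.
\end{align}
The first limit holds because $\ln(2e^2)+2\eta\ln\ln n=o(\ln n)$; the second because $\eta>1$; and the third because $e^2/4>1$, so the polynomial decay $n^{1-e^2/4}$ dominates the $\ln n$ factor. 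The only non-mechanical point in the whole proof is this last observation, i.e.\ that $e^2>4$, which guarantees the residual term is negligible.

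Next I would establish the matching bound $\liminf_{n\to\infty}\frac{\ln n}{H}\cdot\inf_{\hat{P}}\sup_{P:H(P)\le H}\mathbb{E}_P\|\hat{P}-P\|_1\ge 1$. Fixing any $c\in(0,1)$ and taking $n\ge e^{H}$, I would multiply the minimax lower bound of Theorem~\ref{th_lower_H} by $\ln n/H$ to obtain
\begin{align}
  \frac{\ln n}{H}\cdot\inf_{\hat{P}}\sup_{P:H(P)\le H}\mathbb{E}_P\|\hat{P}-P\|_1\ge c\left(1-n^{1-\frac{1}{c}}(1-c)^{-\frac{1}{c}}\right).
\end{align}
Because $1-1/c<0$, the quantity $n^{1-1/c}$ tends to $0$ while $(1-c)^{-1/c}$ is a fixed constant, so the right-hand side converges to $c$; hence the $\liminf$ is at least $c$ for every $c\in(0,1)$, and sending $c\to 1^-$ yields $\liminf\ge 1$.

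Combining the two displays shows that the limit exists and equals $1$, which is the first assertion of the corollary. For the second assertion, I would note that the $\limsup\le 1$ bound was in fact attained by the specific estimator $\hat{P}^*$ — that is, the same term-by-term analysis gives $\frac{\ln n}{H}\sup_{P:H(P)\le H}\mathbb{E}_P\|\hat{P}^*-P\|_1\to 1$ — so $\hat{P}^*$ of Theorem~\ref{th_lower_H} is asymptotically minimax. I do not anticipate any real obstacle here beyond the elementary limit estimates; the subtlety worth flagging explicitly is simply the inequality $e^2>4$ that makes the $n^{1-e^2/4}$ correction vanish.
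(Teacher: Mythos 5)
Your proposal is correct and follows exactly the route the paper intends: take the non-asymptotic achievability and lower bounds of Theorem~\ref{th_lower_H}, normalize by $\ln n/H$, and let $n\to\infty$ and $c\to 1^-$; the limit computations you give (including the observation that $e^2/4>1$ kills the $n^{1-e^2/4}$ term) are the ``straightforward verification'' the paper alludes to. The only cosmetic remark is that the claim $\frac{\ln n}{H}\sup_{P}\mathbb{E}_P\|\hat{P}^*-P\|_1\to 1$ needs the minimax lower bound for its ``$\ge$'' half (the term-by-term analysis alone gives only the upper half), which you have available anyway, so there is no gap.
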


In light of Corollaries \ref{cor_rate_H} and~\ref{cor_lower_H}, the asymptotic minimax $\ell_1$ risk for the distribution estimation with bounded entropy is exactly $\frac{H}{\ln n}$, half of that obtained by MLE. Since the convergence rate is $\Theta((\ln n)^{-1})$, the performance of the asymptotically minimax estimator with $n$ samples in this problem is nearly that of MLE with $n^2$ samples, which is a significant improvement.

Note that the asymptotically minimax estimator given in Theorem \ref{th_lower_H} is a hard-thresholding estimator which neglects all symbols with frequency less than $e^2\Delta_n=e^2(\ln n)^{2\eta}/n$, and its risk depends on $\eta$ through two factors. On one hand, there is a loss due to ignoring low frequency symbols, which increases with $\eta$ and corresponds to the first term in the right-hand side of (\ref{eq:est_optimal}). On the other hand, the risk incurred by the involvement of high frequency symbols decreases with $\eta$ and corresponds to the $(\ln n)^{-\eta}$ term in (\ref{eq:est_optimal}). Taking derivative with respect to $\eta$ yields that the optimal parameter $\eta^*$ to handle this trade-off takes the form $\eta^*=c_n\ln n/\ln\ln n$ with coefficients $c_n\to 0$ (but larger than $\Theta(\ln \ln n/\ln n)$), and then the upper bound of the $\ell_1$ risk in (\ref{eq:est_optimal}) becomes
\begin{align}
  &\sup_{P: H(P)\le H} \mathbb{E}_P\|\hat{P}^*-P\|_1 \le \frac{H}{(1-2c_n)\ln n-\ln(2e^2)} \nonumber \\
   & \qquad \qquad + n^{-c_n} + n^{1-\frac{e^2}{4}}.
\end{align}

However, one may notice that the asymptotically minimax estimator in Theorem \ref{th_lower_H} outputs estimates that are not necessarily probability distributions. What if we constrain the estimator $\hat{P}$ to output estimates confined to $\mathcal{M}_\infty\triangleq \cup_{S=1}^\infty \mathcal{M}_S$, i.e., are bona fide probabilities? In this case, the next theorem shows that the MLE is asymptotically minimax again.

\begin{theorem}\label{th_lower_H_mle}
  For any $c\in(0,1)$ and $n\ge e^{H}$, we have
  \begin{align}
 &\inf_{\hat{P}\in\mathcal{M}_\infty}\sup_{P: H(P)\le H} \mathbb{E}_P\|\hat{P}-P\|_1 \nonumber\\
 &\qquad\qquad\ge  \frac{2cH}{\ln n}\cdot \left(1-n^{1-\frac{1}{c}}(1-c)^{-\frac{1}{c}}\right),
\end{align}
where the infimum is taken over all estimators with outputs confined to the probability simplex.
\end{theorem}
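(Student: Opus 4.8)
\emph{Overview.} The claim is vacuous when $n^{1-1/c}(1-c)^{-1/c}\ge 1$, so assume the contrary. I would prove the bound by exhibiting a single least-favorable prior supported on $\{P:H(P)\le H\}$ and lower bounding the corresponding Bayes risk restricted to estimators valued in $\mathcal M_\infty$. The factor of $2$ relative to the unconstrained bound of Theorem~\ref{th_lower_H} comes entirely from the elementary identity that for any two probability vectors $P,Q$ one has $\|P-Q\|_1=2\sum_j(p_j-q_j)^+=2\sum_j(q_j-p_j)^+$, i.e.\ the total ``overshoot'' of a simplex-valued $\hat P$ equals its total ``undershoot'', combined with a prior whose support is so spread out that no estimator can profitably guess \emph{where} the small probability mass lies.

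\emph{Construction.} Fix $c\in(0,1)$, set $p\triangleq((1-c)n)^{-1/c}$, and let $k$ be the largest integer for which a distribution placing mass $p$ on $k$ symbols and mass $1-kp$ on one further ``heavy'' symbol has entropy $\le H$; using $\ln(1/p)=\tfrac1c(\ln n+\ln(1-c))$ and the elementary bound $(1-x)\ln\tfrac1{1-x}\le x$, one checks $kp\ge \frac{cH}{\ln n}$ with strictly positive slack, while $n\ge e^H$ guarantees $1-kp>0$. Now take the $k$ rare symbols to be a \emph{uniformly random size-$k$ subset} $T$ of a large universe $[m]$ with $m=km_0$, where $m_0=m_0(n,c)\to\infty$; let $P_T$ place mass $p$ on each $i\in T$, mass $1-kp$ on a fixed symbol $0\notin[m]$, and $0$ elsewhere. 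Since $|T|=k$ is fixed, $H(P_T)\le H$ for every $T$, and hence
\[
\inf_{\hat P\in\mathcal M_\infty}\sup_{P:H(P)\le H}\bE_P\|\hat P-P\|_1\ \ge\ \inf_{\hat P\in\mathcal M_\infty}\bE_{T}\,\bE_{X\mid T}\,\|\hat P-P_T\|_1 .
\]

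\emph{Lower bounding the Bayes risk.} Fix $\hat P\in\mathcal M_\infty$. By the overshoot/undershoot identity, $\|\hat P-P_T\|_1\ge 2\max\{A,B\}$, where $A\triangleq\sum_{i\in[m]\setminus T}\hat p_i$ is a lower bound on the overshoot (mass wasted on zero-probability universe symbols) and $B\triangleq\sum_{i\in T}(p-\hat p_i)^+$ is a lower bound on the undershoot on the rare symbols; it remains to bound $\bE\max\{A,B\}$ from below. A rare symbol is observed with probability $1-(1-p)^n\le np\to0$, and $\pi\triangleq P(i\in T)=1/m_0\to0$, so on the dominant event $\{X_i=0\}$ the posterior probability of $\{i\in T\}$ is $\approx\pi(1-p)^n$, uniformly across the (almost all) unobserved symbols. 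Writing $\bar v_i\triangleq\bE[\hat p_i\mid X_i=0]$ and $V\triangleq\sum_{i\in[m]}\bar v_i$, and using the $O(1/m)$-approximate independence of $X_{-i}$ from $\{i\in T\}$, one gets $\bE[A]\gtrsim(1-\pi)V$ and (via Jensen) $\bE[B]\gtrsim\pi(1-p)^n(mp-V)$; taking the worst case over the estimator's ``hedging level'' $V$,
\[
\bE\|\hat P-P_T\|_1\ \ge\ 2\max\bigl\{(1-\pi)V,\ \pi(1-p)^n(mp-V)\bigr\}\ \ge\ \frac{2(1-\pi)(1-p)^n\,\pi m p}{(1-\pi)+\pi(1-p)^n}.
\]
Since $\pi m=k$, $(1-p)^n\ge1-np=1-n^{1-1/c}(1-c)^{-1/c}$, and $\frac{1-\pi}{(1-\pi)+\pi(1-p)^n}\ge1-\pi$, the right-hand side is at least $2kp(1-np)(1-\pi)$, which, once $m_0$ is chosen large enough (in terms of $n$ and $c$) that $(1-\pi)$ is absorbed by the slack in $kp\ge\frac{cH}{\ln n}$, is $\ge \frac{2cH}{\ln n}\bigl(1-n^{1-1/c}(1-c)^{-1/c}\bigr)$.

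\emph{Main obstacle.} The conceptual content --- overshoot $=$ undershoot, plus a universe large enough to neutralize guessing --- is straightforward; the work is the quantitative bookkeeping. One must (i) choose $p$ and $k$ so that $H(P_T)\le H$ holds while $kp$ stays at least $\frac{cH}{\ln n}$ up to a factor driven to $1$ (this is exactly where $c<1$ and the term $n^{1-1/c}(1-c)^{-1/c}$ enter, paralleling the proof of Theorem~\ref{th_lower_H}); (ii) control the $O(1/m)$ dependencies created by fixing $|T|=k$, either bounding them directly or replacing $T$ by an independent-inclusion model together with a concentration argument for $|T|$ and for the entropy; and (iii) tune $m_0=m_0(n,c)$ so that the $(1-\pi)$ and $(1-o(1))$ factors combine to leave precisely the stated non-asymptotic constant rather than something merely asymptotically equivalent to it. I expect (iii) --- squeezing the argument down to the exact stated form --- to be the most delicate part.
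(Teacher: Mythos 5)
Your construction is, at bottom, the paper's own least-favorable prior: $k$ equal atoms of mass $p$ placed on a uniformly random subset of a universe of size $km_0$, plus one heavy atom pinning the entropy (the paper's $S'$ and $k$ play the roles of your $k$ and $m_0$), and the factor of $2$ arises for the same reason in both arguments --- a simplex-valued estimator must park its residual mass somewhere, and as the universe-to-support ratio grows that mass is almost surely wasted. Where you genuinely diverge is in how the Bayes risk is bounded. The paper uses the symmetry of the posterior: given $\mathbf{X}$ it is uniform over placements of the $S'-N(\mathbf{X})$ unseen atoms among the unobserved positions, so the simplex-constrained posterior minimization can be solved in closed form, giving exactly $\frac{2(k-1)S'}{kS'-N(\mathbf{X})}\left(1-\frac{N(\mathbf{X})}{S'}\right)\delta$, after which only $\bE N(\mathbf{X})\le n\delta$ and the limit $k\to\infty$ are needed. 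Your overshoot/undershoot-plus-hedging bound replaces that exact computation with the approximate-independence claims $\bE[\hat p_i\mid i\in T, X_i=0]\approx\bE[\hat p_i\mid X_i=0]$ (your item (ii)); that is the real work in your route and is precisely what the paper's exact posterior calculation avoids. If you keep your route, Poissonizing, or conditioning on $N(\mathbf{X})$ and exploiting exchangeability as the paper does, is the cleanest way to make those ``$\gtrsim$''s rigorous.

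There is also one concrete gap in the stated argument. With $p$ pinned at $((1-c)n)^{-1/c}$ and $k$ forced to be an integer, the claim ``$kp\ge \frac{cH}{\ln n}$ with strictly positive slack'' can fail: for $H=0.01$, $n=100$, $c=\tfrac12$ one has $p=4\cdot10^{-4}$, the entropy budget admits only $k=2$, so $kp=8\cdot10^{-4}<\frac{cH}{\ln n}\approx1.09\cdot10^{-3}$, even though the theorem's bound is non-vacuous there ($n^{1-1/c}(1-c)^{-1/c}=0.04$); your construction then cannot reach the stated constant for any $m_0$. The remedy is the paper's parametrization: fix $\delta=kp=\frac{cH}{\ln n}$ and let the atom mass be $p=\delta/k$ with $k$ the largest integer compatible with entropy $\le H$; then $p\le((1-c)n)^{-1/c}$ (up to the same integer-rounding the paper also waves off with its ``assume $S'$ is an integer''), which is all you need for $(1-p)^n\ge1-n^{1-1/c}(1-c)^{-1/c}$. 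Relatedly, your worry (iii) is unnecessary: you do not need slack in $kp$ to absorb the $(1-\pi)$ and $O(1/m)$ corrections. For fixed $n$, $c$, $H$ the minimax risk is a single number lower-bounded by your Bayes bound for every $m_0$, so you may let $m_0\to\infty$ and retain the limit --- exactly how the paper disposes of its $\frac{2(k-1)}{k}$ factor by letting $k\to\infty$ --- and this delivers the stated non-asymptotic constant without any tuning.
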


The next corollary follows immediately from Corollary \ref{cor_rate_H} upon taking $c\to 1^-$:
\begin{corollary}\label{cor_lower_H_mle}
For any $H>0$, the asymptotic minimax risk is $\frac{2H}{\ln n}$ when the estimates are restricted to the probability simplex:
\begin{align}
  \lim_{n\to\infty} \frac{\ln n}{H}\cdot \inf_{\hat{P}\in\mathcal{M}_\infty}\sup_{P: H(P)\le H}\mathbb{E}_P\|\hat{P}-P\|_1 = 2,
\end{align}
and the MLE is asymptotically minimax.
\end{corollary}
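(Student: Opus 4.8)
The plan is to reduce the constrained minimax risk to a Bayes risk over a finite subfamily of $\{P:H(P)\le H\}$ and to exploit the fact that, unlike an unconstrained estimator, a genuine probability vector must pay twice over: once for under‑reporting the mass of the many rarely‑observed ``light'' symbols, and again because that deficit is forced onto the ``heavy'' reservoir symbol.

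Fix $\delta=((1-c)n)^{-1/c}$ and integers $m_0\ll m$ with $m_0=\lceil cH/(\delta\ln n)\rceil$, and consider the family $\{P_S:S\subseteq[m],\,|S|=m_0\}$ where $P_S$ gives probability $1-m_0\delta$ to a reservoir symbol $0$, probability $\delta$ to each $i\in S$, probability $0$ to the remaining $m-m_0$ ``light slots,'' and nothing elsewhere. Since the entropy of $P_S$ depends only on its nonzero masses (the single mass $1-m_0\delta$ and the $m_0$ masses $\delta$) and not on $m$, using $-(1-x)\ln(1-x)\le x$ together with $n\ge e^{H}$, $c<1$ one gets $H(P_S)\le m_0\delta\ln(1/\delta)+m_0\delta\le H+\tfrac{H}{\ln n}(\ln(1-c)+c)\le H$; hence the whole family lies in the constraint set, and $m$ may be taken arbitrarily large. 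With the uniform prior on the active set $S$, the constrained minimax risk is at least $\inf_{\hat P\in\mathcal{M}_\infty}\bE_S\bE_{P_S}\|\hat P-P_S\|_1$.

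For the Bayes computation I would first argue, by permutation symmetry of the prior and convexity of $Q\mapsto\bE\|Q-P_S\|_1$, that an optimal $\hat P$ puts no mass outside $\{0\}\cup[m]$ and is exchangeable among the currently unobserved light slots; and that, conditionally on the observed support $O=\{i\in[m]:X_i\ge1\}$ (every element of which is necessarily active), the posterior on the hidden active set $S\setminus O$ is uniform over the size‑$K'$ subsets of the $M'$ unobserved slots, where $K'=m_0-|O|$, $M'=m-|O|$ --- the point being that with $|S|$ pinned to $m_0$ the likelihood is the same for every admissible $S$. Writing $\hat p_0=1-|O|\delta-M't'$ and $\hat p_i=t'$ for each unobserved light slot, the conditional Bayes risk then equals $\min_{t'\ge0}\big(\,|K'\delta-M't'|+K'|t'-\delta|+(M'-K')t'\,\big)=2K'\delta(1-K'/M')$, attained at $t'=K'\delta/M'$; the factor $2$ is precisely the simplex tax, since an unconstrained estimator would instead report the reservoir mass exactly and take $t'=0$, paying only $K'\delta$ (this recovers Theorem~\ref{th_lower_H}). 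Taking expectations, $\bE K'=m_0(1-\delta)^n$, and letting $m\to\infty$ removes the $K'/M'$ correction, so the constrained minimax risk is at least $2m_0\delta(1-\delta)^n\ge 2m_0\delta(1-n\delta)\ge \tfrac{2cH}{\ln n}\big(1-n^{1-1/c}(1-c)^{-1/c}\big)$, the last step using $m_0\delta\ge cH/\ln n$ and $(1-\delta)^n\ge1-n\delta$.

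The one genuinely delicate step is the Bayes reduction to the scalar minimization: I must rule out that a cleverer, non‑exchangeable, or randomized probability estimate --- in particular one that adaptively shades mass onto the reservoir --- beats $2K'\delta$. I expect this to come out of the standard symmetrization argument (average $\hat P$ over the subgroup of permutations fixing $O$ and invoke Jensen on the convex risk), after which the symmetric problem is exactly the displayed one‑parameter optimization. The rest --- the entropy bookkeeping, the rounding of $m_0$, and the inequality $(1-\delta)^n\ge1-n\delta$ --- is routine.
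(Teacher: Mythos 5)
Your argument is essentially the paper's: the same unfavorable prior (one heavy reservoir symbol of mass $1-m_0\delta$ plus $m_0$ equiprobable light symbols placed uniformly among $m$ slots, with $m\to\infty$ playing the role of the paper's $k\to\infty$), the same observation that the posterior on the hidden active slots is uniform given the observed support, the same mechanism by which the simplex constraint doubles the risk, and the same final bound $\tfrac{2cH}{\ln n}\bigl(1-n^{1-1/c}(1-c)^{-1/c}\bigr)$ as in Theorem~\ref{th_lower_H_mle}; combining it with the MLE upper bound of Corollary~\ref{cor_rate_H} (which you use only implicitly and should state) and letting $c\to1^-$ gives the corollary. The only real divergence is at the step you yourself flag. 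The paper never identifies the constrained Bayes estimator: it lower-bounds the posterior risk of an \emph{arbitrary} vector $(a_1,\dots,a_S)$ satisfying $\sum_i a_i=1$ by a short chain of triangle inequalities (introducing $\lambda=\frac{(k-2)S'+N}{kS'-N}\in[0,1]$ and collapsing everything into a single absolute value), which directly yields $\frac{2(k-1)S'}{kS'-N}\bigl(1-\frac{N}{S'}\bigr)\delta$ with no symmetrization and no case analysis.

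Your route (symmetrize over permutations fixing the observed support, apply Jensen, then solve a scalar problem) is workable, but as written the reduction is slightly too quick: after symmetrizing only the unobserved light slots, the values on the observed slots, on the reservoir, and on slots outside $\{0\}\cup[m]$ remain free, so the symmetric problem is not yet your one-parameter problem. It becomes so after two small additional remarks: (i) any off-support mass can be moved onto the reservoir without increasing the posterior risk (you save its full contribution and the reservoir term changes by at most that amount), so you cannot simply ``put no mass outside'' while keeping the total equal to one; (ii) setting $u=\sum_{i\in O}a_i-|O|\delta$ and $v=a_0-(1-m_0\delta)$, the simplex constraint forces $u+v=K'\delta-M't'$, and $\sum_{i\in O}|a_i-\delta|+|a_0-(1-m_0\delta)|\ge |u|+|v|\ge |K'\delta-M't'|$, which is exactly the first term of your displayed minimization. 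With that patch, your computation $\min_{t'\ge0}\bigl(|K'\delta-M't'|+K'|t'-\delta|+(M'-K')t'\bigr)=2K'\delta(1-K'/M')$ is correct and matches the paper's solution, and the remaining bookkeeping ($H(P_S)\le H$ holds for all large $n$ since $c+\ln(1-c)<0$ absorbs the ceiling on $m_0$; $\bE K'=m_0(1-\delta)^n$; $m\to\infty$; $c\to1^-$) goes through as you say.
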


Hence, the MLE still enjoys the asymptotically minimax property in the bounded entropy case if the estimates have to be probability mass functions. We can explain this peculiar phenomenon from the Bayes viewpoint as follows. By the minimax theorem \cite{Wald1950statistical}, any minimax estimator can be always approached by a sequence of Bayes estimators under some priors. However, for general loss functions including the $\ell_1$ loss, the corresponding Bayes estimator (without any restrictions) may not belong to the probability simplex $\mathcal{M}_\infty$ even if the prior is supported on $\mathcal{M}_\infty$. Hence, the space of all Bayes estimators which form probability masses is strictly smaller than the space of all free Bayes estimators without any constraints. For example, it is well-known that under the $\ell_1$ loss, the Bayes response is actually the median vector of the posterior distribution, and thus does not necessarily form a probability distribution. We also remark that for some special loss functions such as the squared error loss, the Bayes estimator under any prior supported on $\mathcal{M}_\infty$ will always belong to $\mathcal{M}_\infty$. Broadly speaking, if the loss function is a Bregman divergence \cite{banerjee2005clustering}, the Bayes estimator under any prior will always be the conditional expectation, which is supported on the convex hull of the parameter space.

\subsection{Discussion}

Now we draw various connections between our results and the literature.

\subsubsection{Density estimation under $L_1$ loss}
There is an extensive literature on density estimation under $L_1$ loss, and we refer to the book by Devroye and Gyorfi~\cite{Devroye1985nonparametric} for an excellent overview. This problem is also very popular in theoretical computer science, e.g. \cite{Chan2014Efficient}.

The problem of discrete distribution estimation under $\ell_1$ loss has been the subject of recent interest in the theoretical computer science community. For example, Daskalakis, Diakonikolas and Servedio considered the problem of $k$-modal distributions \cite{daskalakis2012learning}, and a very recent talk by Diakonikolas \cite{Diakonikolas2014beyond} provided a literature survey. The conclusion that it is necessary and sufficient to use $n=\Theta(S)$ samples to consistently estimate an arbitrary discrete distribution with support size $S$ has essentially appeared in the literature \cite{Diakonikolas2014beyond}, but we did not find an explicit reference giving non-asymptotic results, and for completeness we have included proofs corresponding to the high dimensional asymptotics in this paper. We remark that, a very detailed analysis of the discrete distribution estimation problem under $\ell_1$ loss may be insightful and instrumental in future breakthroughs in density estimation under $L_1$ loss.

\subsubsection{Entropy estimation}

It was shown that the entropy is nearly Lipschitz continuous under the $\ell_1$ norm \cite[Thm. 17.3.3]{Cover1991information}\cite[Lemma 2.7]{csiszar1981information}, i.e., if $\|P-Q\|_1\le1/2$, then
\begin{align}\label{eq:abs}
  |H(P)-H(Q)|\le -\|P-Q\|_1 \ln\frac{\|P-Q\|_1}{S},
\end{align}
where $S$ is the support size. At first glance, it seems to suggest that the estimation of entropy can be reduced to estimation of discrete distributions under $\ell_1$ loss. However, this question is far more complicated than it appears.

First, people have already noticed that this near-Lipschitz continuity result is valid only for finite alphabets \cite{Silve--Parada2012Shannon}. The following result by Antos and Kontoyiannis \cite{Antos2001convergence} addresses entropy estimation over countably infinite support sizes.
\begin{remark}\label{rem_antos}
Among all discrete sources with finite entropy and $\mathsf{Var}(-\ln p(X))<\infty$, for any sequence $\{H_n\}$ of estimators for the entropy, and for any sequence $\{a_n\}$ of positive numbers converging to zero, there is a distribution $P$ (supported on at most countably infinite symbols) with $H=H(P)<\infty$ such that
\begin{align}
  \limsup_{n\to\infty} \frac{\bE_P|H_n-H|}{a_n} = \infty.
\end{align}
\end{remark}

Remark \ref{rem_antos} shows that, among all sources with finite entropy and finite varentropy, no rate-of-convergence results can be obtained for any sequence of estimators. Indeed, if the entropy is still nearly-Lipschitz continuous with respect to $\ell_1$ distance in the infinite alphabet setting, then Corollary~\ref{cor_rate_H} immediately implies that the MLE plug-in estimator for entropy attains a universal convergence rate. That there is no universal convergence rate of entropy estimators for sources with bounded entropy is particularly interesting in light of the fact that the minimax rates of convergence of distribution estimation with bounded entropy is $\Theta((\ln n)^{-1})$.

Second, it is very interesting and deserves pondering that along high dimensional asymptotics, the minimax sample complexity for estimating entropy is $n=\Theta(\frac{S}{\ln S})$ samples, a result first discovered by Valiant and Valiant \cite{Valiant--Valiant2011}, then recovered by Jiao \emph{et al.} using a different approach in \cite{jiao2015minimax}, and Wu and Yang in \cite{Wu--Yang2014minimax}. Since it is shown in Corollary \ref{cor_lower_sample_S} that we need $n=\Theta(S)$ samples to consistently estimate the distribution, this result shows that we can consistently estimate the entropy without being able to consistently estimate the underlying distribution under $L_1$ loss. Note that if the plug-in approach is used for entropy estimation, i.e., if we use $H(P_n)$ to estimate $H(P)$, it has been shown in \cite{paninski2003estimation,jiao2014nonasymptotic} that this estimator again requires $n=\Theta(S)$ samples. In fact, for a wide class of functionals of discrete distributions, it is shown in \cite{jiao2015minimax} that the MLE is strictly suboptimal, and the performance of the optimal estimators with $n$ samples is essentially that of the MLE with $n\ln n$ samples, a phenomenon termed ``effective sample size enlargement" in \cite{jiao2015minimax}. Jiao \emph{et al.}\cite{jiao2014beyond} showed that the improved estimators introduced in~\cite{jiao2015minimax} can lead to consistent and substantial performance boosts in various machine learning algorithms.

\subsubsection{$\ell_1$ divergence estimation between two distributions}

Now we turn to the estimation problem for the $\ell_1$ divergence $\|P-Q\|_1$ between two discrete distributions $P,Q$ with support size at most $S$. At first glance, by setting one of the distributions to be deterministic, the problem of $\ell_1$ divergence estimation seems a perfect dual to the distribution estimation problem under $\ell_1$ loss. However, compared to the required sample complexity $n=\Theta(S)$ in the distribution estimation problem under $\ell_1$ loss, the minimax sample complexity for estimating the $\ell_1$ divergence between two arbitrary distributions is $n=\Theta(\frac{S}{\ln S})$ samples, a result of Valiant and Valiant \cite{Valiant--Valiant2011power}. Hence, it is easier to estimate the $\ell_1$ divergence than to estimate the distribution with a vanishing $\ell_1$ risk. Note that for distribution estimation, for each symbol we need to obtain a good estimate for $p_i$ in terms of the $\ell_1$ risk, while for $\ell_1$ divergence estimation we do not need to estimate each $p_i$ and $q_i$ separately.

\subsubsection{Joint $d$-block distribution estimation in stochastic processes}

Insights can be gleaned from the comparison of Corollary \ref{cor_rate_H} and the result obtained by Marton and Shields \cite{Marton1994entropy}. Marton and Shields showed that, in a stationary ergodic stochastic process with sample size $n$ and entropy rate $H$, the joint $d$-tuple distribution of the process can be consistently estimated using the empirical distribution if $d\le \frac{(1-\epsilon)\ln n}{H}$, where $\epsilon>0$ is an arbitrary constant. Moreover, the empirical distribution is not consistent if $d\geq  \frac{(1+\epsilon)\ln n}{H}$. Now we treat the joint $d$-tuple distribution as a single distribution with a large support size, and consider the corresponding estimation problem. To be precise, we assume without loss of generality that the original process consists of $n$ observations and merge all disjoint blocks containing $d=\frac{(1-\epsilon)\ln n}{H}$ symbols into supersymbols. Consequently, we obtain a sample size of $n/d$ from which we would like to learn a new distribution over an alphabet of size $S^d$ and entropy nearly $dH$. Considering a special case where the stationary ergodic process is nearly i.i.d. and applying our result on the minimum sample complexity of the distribution estimation problem with bounded support size, we need $\Theta(S^d)=\Theta(n^{\frac{(1-\epsilon)\ln S}{H}})$ samples to estimate the new distribution, which cannot be achieved by $n$ samples unless $\ln S=H$, i.e., the distribution is uniform. Furthermore, under the regime of distribution estimation with bounded entropy, the asymptotic minimax risk of MLE will be $\frac{2dH}{\ln n}=2(1-\epsilon)$, which does not vanish as $n\to\infty$. Hence, we conclude that the minimax distribution estimation under i.i.d. samples is indeed harder than the joint $d$-tuple distribution estimation in a stationary ergodic process. To clarify the distinction, we remark that the ergodicity plays a crucial role in the latter case: for $d\to\infty$, the Shannon-McMillan-Breiman theorem \cite{Algoet--Cover1988sandwich} guarantees that there are approximately $e^{dH}$ typical sequences of length $d$, each of which occurs with probability about $e^{-dH}$. Then it turns out that we need only to estimate a uniform distribution with support size $e^{dH}$, and applying the previous conclusion $n/d=\Theta(e^{dH})$ yields the desired result $d\approx\frac{\ln n}{H}$. On the other hand, for the distribution estimation problem with a large support size, the equipartition property does not necessarily hold, and our scheme focuses on the worst-case risk over all possible distributions. Hence, it is indeed harder to handle the distribution estimation problem under i.i.d. samples in the large alphabet regime, and it indicates that directly applying results from a large alphabet regime to stochastic processes with a large memory may not be a fruitful route. Before closing the discussion, we mention that Marton and Shields did not show that the $d \sim \frac{\ln n}{H}$ scaling is minimax optimal. It remains an interesting question to investigate whether there is a better estimator to estimate the $d$-tuple joint distribution in stochastic processes.

\subsubsection{Hard-thresholding estimation is asymptotically minimax}

Corollary \ref{cor_lower_H} shows that in the infinite dimensional asymptotics, MLE is far from asymptotically minimax, and a hard-thresholding estimator achieves the asymptotic minimax risk. The phenomenon that thresholding methods are needed in order to obtain minimax estimators for high dimensional parameters in a $\ell_p$ ball under $\ell_q$ error, $p>0, p<q, q\in[1,\infty)$, was first noticed by Donoho and Johnstone \cite{Donoho--Johnstone1994minimax}. Following the rationale of the James-Stein shrinkage estimator \cite{Stein1956inadmissibility}, Donoho and Johnstone proposed the soft- and hard-thresholding estimators for the normal mean given that we know \emph{a priori} that the mean $\theta$ lies in a $\ell_p$ ball, $p\in(0,\infty)$. Later, Donoho and Johnstone applied this idea to nonparametric estimation in Besov spaces, and obtained the famous \emph{wavelet shrinkage} estimator for denoising \cite{Donoho--Johnstone1994ideal}. Note that the set $\{P: H(P)\le H\}$ forms a ball similar to the $\ell_p$ ball, and the loss function is $\ell_1$, so it is not surprising that hard-thresholding leads to an asymptotically minimax estimator. The asymptotic minimax estimators under other constraints on the distribution $P$ remain to be explored.

\subsubsection{Adaptive estimation}

Note that in the infinite dimensional asymptotics, for a sequence of problems $\{H(P)\le H\}$ with different upper bounds $H$, the asymptotically minimax estimator in Theorem~\ref{th_lower_H} achieves the minimax risk over all ``entropy balls" without knowing its ``radius" $H$. It is very important in practice, since we do not know a priori an upper bound on the entropy of the distribution. This estimator belongs to a general collections of estimators called adaptive estimators. For details we refer to a survey paper by Cai \cite{cai2012}.

The rest of this paper is organized as follows. Section \ref{sec_2} provides outlines of the proofs of the main theorems, and some useful auxiliary lemmas are listed in Appendix \ref{sec_app_A}. Complete proofs of some lemmas and corollaries are provided in Appendix \ref{sec_app_B}.

\newcounter{mytempeqncnt}
\begin{figure*}[!b]
\normalsize
\setcounter{mytempeqncnt}{\value{equation}}
\setcounter{equation}{53}
\hrulefill
\begin{align}
  \mathbb{E}_P\|P-\hat{P}\|_1 &\le \sum_{p_i\le \Delta_n}\left(p_i+\left(\frac{p_i}{e\Delta_n}\right)^{e^2\ln n}\right)
  + \sum_{\Delta_n<p_i<2e^2\Delta_n}\left(\sqrt{\frac{p_i}{n}} + p_i\right)
  + \sum_{p_i\ge2e^2\Delta_n}\left(\sqrt{\frac{p_i}{n}} + n^{-\frac{e^2}{4}}\right)\\
  &= \sum_{p_i< 2e^2\Delta_n}p_i + \sum_{p_i>\Delta_n}\sqrt{\frac{p_i}{n}} + \sum_{p_i\le \Delta_n}\left(\frac{p_i}{e\Delta_n}\right)^{e^2\ln n} + \sum_{p_i\ge2e^2\Delta_n}n^{-\frac{e^2}{4}}\\
  &\le \left(\ln \frac{1}{2e^2\Delta_n}\right)^{-1}\cdot H + \frac{1}{\sqrt{n\Delta_n}} + e^{-e^2\ln n}\cdot \frac{n}{(\ln n)^{2\eta}} + n^{-\frac{e^2}{4}}\cdot \frac{1}{2e^2\Delta_n}\\
  &= \frac{H}{\ln n-\ln(2e^2)-2\eta\ln\ln n} + \frac{1}{(\ln n)^{\eta}} + \frac{1}{n^{e^2-1}(\ln n)^{2\eta}} + \frac{1}{2e^2n^{\frac{e^2}{4}-1}(\ln n)^{2\eta}}\\
  &\le \frac{H}{\ln n-\ln(2e^2)-2\eta\ln\ln n} + (\ln n)^{-\eta} + n^{1-\frac{e^2}{4}}
\end{align}
% Restore the current equation number.
\setcounter{equation}{\value{mytempeqncnt}}
% IEEE uses as a separator
% The spacer can be tweaked to stop underfull vboxes.
\vspace*{4pt}
\end{figure*}
\setcounter{equation}{27}

\section{Outlines of Proofs of Main Theorems}\label{sec_2}
\subsection{Analysis of MLE}
For the analysis of the performance of the MLE $P_n$, the key is to obtain a good approximation of $\bE\left|X/n-p\right|$ with $X\sim \mathsf{B}(n,p)$, i.e., the Binomial mean absolute deviation. Lemma \ref{lem_bino} in the Appendix lists some sharp approximations, which together with the concavity of $\sqrt{x(1-x)}$ yield
\begin{align}
  \mathbb{E}_P\|P-P_n\|_1 &= \sum_{i=1}^S \bE_P\left|\frac{X_i}{n}-p_i\right| \\
  &\le \sum_{i=1}^S \sqrt{\frac{p_i(1-p_i)}{n}} \\
  &\le \sqrt{\frac{S-1}{n}},
\end{align}
which completes the proof of Theorem \ref{th_MLE_S}. For the upper bound in Theorem \ref{th_MLE_H}, we use Lemma \ref{lem_bino} again and obtain
\begin{align}
  \mathbb{E}_P\|P-P_n\|_1
&\le \sum_{i=1}^S \min\left\{\sqrt{\frac{p_i}{n}},2p_i\right\}\\
  &\le 2\sum_{p_i\le \frac{(\ln n)^{2\eta}}{n}} p_i + \frac{1}{\sqrt{n}}\sum_{p_i> \frac{(\ln n)^{2\eta}}{n}} \sqrt{p_i}\\
  &\le \frac{2}{\ln n - 2\eta\ln\ln n}\sum_{p_i\le \frac{(\ln n)^{2\eta}}{n}} \left(-p_i\ln p_i\right) \nonumber\\
   & \qquad + \frac{1}{\sqrt{n}}\cdot \sqrt{\left|\left\{i: p_i> \frac{(\ln n)^{2\eta}}{n}\right\}\right|}\\
  &\le \frac{2H}{\ln n-2\eta\ln\ln n} + \frac{1}{(\ln n)^{\eta}}.
\end{align}

For the lower bound, we consider the distribution $P=(\delta/S',\cdots,\delta/S',1-\delta)$ with entropy $H$, then for $c\in(0,1),\delta\le c$, due to the monotone decreasing property of $(1-x)^{\frac{1}{x}}$ with respect to $x\in(0,1)$,
\begin{align}
  S' &= \delta\exp\left(\frac{H}{\delta}\right)\cdot \left((1-\delta)^{\frac{1}{\delta}}\right)^{1-\delta} \\
%  &\ge \delta\exp\left(\frac{H}{\delta}\right)\cdot \left(1-c\right)^{\frac{1-\delta}{c}} \\
  &\ge \delta\exp\left(\frac{H}{\delta}\right)\cdot \left(1-c\right)^{\frac{1}{c}}.
\end{align}
Note that $S'+1$ is the support size, and we assume without loss of generality that $S'$ is an integer. For $c\in(0,1)$, since $n\ge e^H$, we choose $\delta=cH/\ln n\le c$, then since $n>(1-c)^{-\frac{1}{1-c}}$,
\begin{align}
  \frac{\delta}{S'} \le (1-c)^{-\frac{1}{c}}\exp\left(-\frac{H}{\delta}\right) = \left((1-c)n\right)^{-\frac{1}{c}} < \frac{1}{n},
\end{align}
and the identity in Lemma \ref{lem_bino} can be applied to obtain
\begin{align}
  \mathbb{E}_P\|P-P_n\|_1
  &\ge \sum_{i=1}^{S'} \mathbb{E}_P \left|p_i-\frac{X_i}{n}\right| \\
  &= 2\delta \left(1-\frac{\delta}{S'}\right)^n \\
  &\ge \frac{2cH}{\ln n}\left(1-((1-c)n)^{-\frac{1}{c}}\right)^n.
\end{align}

\begin{figure*}[!b]
\normalsize
\setcounter{mytempeqncnt}{\value{equation}}
\setcounter{equation}{58}
\hrulefill
\begin{align}
  R_B(S,n,\mu_0^S) &= \int \mathbb{E}_P\|P-\hat{P}^B\|_1 \mu(dP)\\
  &= \sum_{i=1}^{S}\int \sum_{k=0}^\infty |p_i-f(k)|e^{-np_i}\frac{(np_i)^k}{k!} \mu_0(dp_i)\\
  &\ge \sum_{i=1}^{S}\int\sum_{k=0}^\infty |p_i-f(k)|\min\left\{\bP\{\mathsf{Poi}\left(\frac{n(1-\eta)}{S}\right)=k\} , \bP\{\mathsf{Poi}\left(\frac{n(1+\eta)}{S}\right)=k\}\right\} \mu_0(dp_i)\\
  &\ge S\cdot \frac{\eta}{S}\sum_{k=0}^\infty \min\left\{\bP\{\mathsf{Poi}\left(\frac{n(1-\eta)}{S}\right)=k\} , \bP\{\mathsf{Poi}\left(\frac{n(1+\eta)}{S}\right)=k\}\right\}\\
  &= \eta \cdot\sum_{k=0}^\infty\min\left\{\bP\{\mathsf{Poi}\left(\frac{n(1-\eta)}{S}\right)=k\} , \bP\{\mathsf{Poi}\left(\frac{n(1+\eta)}{S}\right)=k\}\right\}\\
  &= \eta \cdot\left(1-d_{\text{TV}}\left(\mathsf{Poi}\left(\frac{n(1-\eta)}{S}\right),\mathsf{Poi}\left(\frac{n(1+\eta)}{S}\right)\right)\right),
\end{align}
% Restore the current equation number.
\setcounter{equation}{\value{mytempeqncnt}}
% IEEE uses as a separator
% The spacer can be tweaked to stop underfull vboxes.
\vspace*{4pt}
\end{figure*}
\setcounter{equation}{40}

\subsection{Analysis of the Estimator in Theorem \ref{th_lower_H}}
For the achievability result, we first establish some lemmas on the properties of $g_n(X)$.
\begin{lemma}
  If $X\sim \mathsf{B}(n,p),p\le \Delta_n$, we have
  \begin{align}
     \bE \left|g_n(X)-p\right|\le p+\left(\frac{p}{e\Delta_n}\right)^{e^2\ln n}.
  \end{align}
\end{lemma}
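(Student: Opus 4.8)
The plan is to split the expectation according to whether the hard threshold is crossed. Write $t_n \triangleq e^2\Delta_n$ for the threshold and observe that on the event $\{X/n \le t_n\}$ the estimator outputs $g_n(X)=0$, so the contribution of this event to $\bE|g_n(X)-p|$ is exactly $p\,\bP(X/n\le t_n)\le p$, which already accounts for the first term on the right-hand side. Everything else reduces to showing that the contribution of the complementary event $\{X/n>t_n\}$ is at most $(p/(e\Delta_n))^{e^2\ln n}$.

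On $\{X/n>t_n\}$ the hypothesis $p\le\Delta_n$ gives $X/n>e^2\Delta_n\ge e^2p>p$, so $|g_n(X)-p|=X/n-p\le X/n\le 1$. Hence the contribution of this event is bounded by $\bP(X/n>t_n)=\bP(X\ge m)$ with $m\triangleq\lceil e^2n\Delta_n\rceil=\lceil e^2(\ln n)^{2\eta}\rceil$ (integrality of $X$ turns $X>e^2n\Delta_n$ into $X\ge m$). I would then invoke a crude binomial tail bound via factorial moments: since $\binom{X}{m}\ge\mathbbm{1}(X\ge m)$ pointwise and $\bE\binom{X}{m}=\binom{n}{m}p^m$ for $X\sim\mathsf{B}(n,p)$, we get $\bP(X\ge m)\le\binom{n}{m}p^m\le (np)^m/m!\le (enp/m)^m$, using $m!\ge (m/e)^m$.

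Finally I would substitute the value of $m$. Because $m\ge e^2n\Delta_n$, we have $enp/m\le enp/(e^2n\Delta_n)=p/(e\Delta_n)$, and the hypothesis $p\le\Delta_n$ forces $p/(e\Delta_n)\le 1/e<1$. Thus $(enp/m)^m\le(p/(e\Delta_n))^m$; and since the base lies in $(0,1)$ while $m\ge e^2n\Delta_n=e^2(\ln n)^{2\eta}\ge e^2\ln n$ (true once $\ln n\ge 1$, using $\eta>1$), lowering the exponent only enlarges the quantity, giving $(p/(e\Delta_n))^m\le(p/(e\Delta_n))^{e^2\ln n}$. Adding the two contributions yields $\bE|g_n(X)-p|\le p+(p/(e\Delta_n))^{e^2\ln n}$, as claimed.

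I do not expect a genuine obstacle here; the only delicate points are the bookkeeping around the ceiling in the definition of $m$ and, more importantly, checking that the base $p/(e\Delta_n)$ is \emph{strictly} below $1$ so that trading the exponent $m$ down to $e^2\ln n$ goes in the correct direction. It is precisely the constant $e^2$ in the threshold (rather than, say, $1$) that provides this slack and also permits the exponent to be reduced all the way to $e^2\ln n$, which is what the subsequent summation over symbols in the proof of Theorem~\ref{th_lower_H} requires.
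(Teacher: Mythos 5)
Your argument is correct and follows the paper's skeleton exactly — split at the threshold, bound the sub-threshold contribution by $p$, and bound the super-threshold contribution by the tail probability $\bP(X> e^2 n\Delta_n)$ using $X/n\le 1$ — but it differs in the key tail estimate. The paper invokes the Chernoff bound of Lemma~\ref{lem_chernoff} to get $\bP(X>e^2n\Delta_n)\le \left(\frac{np}{en\Delta_n}\right)^{e^2(\ln n)^{2\eta}}$, whereas you use the elementary combinatorial/factorial-moment bound $\bP(X\ge m)\le \binom{n}{m}p^m\le (enp/m)^m$ with $m=\lceil e^2 n\Delta_n\rceil$, which yields the same quantity $\left(\frac{p}{e\Delta_n}\right)^{m}$ up to the harmless ceiling. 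Your route has two small advantages: it is self-contained for the Binomial case and does not lean on Lemma~\ref{lem_chernoff} outside its stated range (the lemma is quoted for $0<\delta<1$, while this application needs $1+\delta\ge e^2\Delta_n/p\ge e^2$; the Chernoff bound does hold for all $\delta>0$, but as cited there is a mismatch that your argument sidesteps), and it makes explicit the condition $\ln n\ge 1$ needed to lower the exponent from $e^2(\ln n)^{2\eta}$ to $e^2\ln n$, which the paper leaves implicit in the remark that $p/(e\Delta_n)\le 1/e<1$. The paper's Chernoff route, in exchange, covers the Poissonized model with no change and reuses a lemma already needed elsewhere (e.g., in the proof of Lemma~\ref{lem_relate}). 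The only degenerate case worth a word is $p=0$, where both sides vanish and the claim is trivial.
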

\begin{proof}
  It is clear from the triangle inequality that
  \begin{align}
    \bE \left|g_n(X)-p\right| &\le p + \bE\left[\frac{X}{n}\mathbbm{1}\left(\frac{X}{n}>e^2\Delta_n\right)\right]\\
    &\le p + \bP(X>e^2n\Delta_n)\\
    &\le p+\left(\frac{np}{en\Delta_n}\right)^{e^2(\ln n)^{2\eta}}
  \end{align}
 where Lemma \ref{lem_chernoff} in Appendix A is used in the last step. The proof is completed by noticing that $\frac{p}{e\Delta_n}\le \frac{1}{e}<1$.
\end{proof}

\begin{lemma}
  If $X\sim \mathsf{B}(n,p),p\ge 2e^2\Delta_n$, we have
  \begin{align}
     \bE \left|g_n(X)-p\right| \le  \sqrt{\frac{p}{n}} + n^{-\frac{e^2}{4}}.
  \end{align}
\end{lemma}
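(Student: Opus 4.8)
The plan is to split the expectation according to whether the empirical frequency $X/n$ exceeds the threshold $e^{2}\Delta_n$. On the event $\{X/n > e^{2}\Delta_n\}$ the estimator is unmodified, $g_n(X) = X/n$, so the contribution there is $|X/n - p|$; on the complementary event $g_n(X) = 0$, so the contribution is exactly $p$. This gives
\begin{align}
  \bE\left|g_n(X) - p\right| \;\le\; \bE\left|\frac{X}{n} - p\right| \;+\; p\cdot\bP\left(X \le e^{2} n\Delta_n\right).
\end{align}
For the first term I would invoke the Binomial mean absolute deviation bound of Lemma~\ref{lem_bino} (or merely Cauchy--Schwarz), giving $\bE|X/n - p| \le \sqrt{p(1-p)/n} \le \sqrt{p/n}$, which is already the first term in the claimed bound.

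The role of the hypothesis $p \ge 2e^{2}\Delta_n$ is to place the threshold at most halfway to the mean, i.e. $e^{2} n\Delta_n \le np/2$. I would then apply a multiplicative Chernoff lower-tail bound (Lemma~\ref{lem_chernoff} in Appendix~A) with deviation parameter $1/2$, namely $\bP(X \le np/2) \le \exp(-np/8)$, and combine it with $np \ge 2e^{2} n\Delta_n = 2e^{2}(\ln n)^{2\eta}$ and $p\le 1$ to obtain
\begin{align}
  p\cdot\bP\left(X \le e^{2}n\Delta_n\right) \;\le\; \bP\left(X \le \frac{np}{2}\right) \;\le\; \exp\left(-\frac{e^{2}}{4}(\ln n)^{2\eta}\right).
\end{align}
Since $\eta > 1$ forces $(\ln n)^{2\eta} \ge \ln n$ for $n \ge e$, the right-hand side is at most $n^{-e^{2}/4}$, and adding the two pieces completes the proof.

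The only delicate step is the bookkeeping of constants through the Chernoff estimate, so that the exponent lands on exactly $e^{2}/4$: the factor $2$ supplied by $p \ge 2e^{2}\Delta_n$ has to be multiplied against the $1/8$ of the $\delta = 1/2$ Chernoff bound, and the surplus power $(\ln n)^{2\eta - 1} \ge 1$ dropped to pass from $(\ln n)^{2\eta}$ down to $\ln n$ in the exponent. There is no genuine obstacle here; this lemma is the easy counterpart of the preceding one (small $p$), with a lower-tail rather than an upper-tail concentration bound doing the work.
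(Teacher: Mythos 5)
Your proposal is correct and follows essentially the same route as the paper: the triangle-inequality split into $\bE|X/n-p|$ plus the probability of falling below the threshold, the bound $\sqrt{p(1-p)/n}\le\sqrt{p/n}$ from Lemma~\ref{lem_bino}, and the lower-tail Chernoff bound with $\delta=1/2$ using $np\ge 2e^{2}(\ln n)^{2\eta}$, ending with $(\ln n)^{2\eta}\ge\ln n$. The only cosmetic difference is that you bound the below-threshold contribution by $p\cdot\bP(X\le e^{2}n\Delta_n)$ while the paper bounds $\bE[(X/n)\mathbbm{1}(X/n\le e^{2}\Delta_n)]\le\bP(X\le e^{2}n\Delta_n)$, which amounts to the same estimate.
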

\begin{proof}
  It follows from the identity
  \begin{align}
    g_n(X) = \frac{X}{n} - \frac{X}{n}\mathbbm{1}\left(\frac{X}{n}\le e^2\Delta_n\right)
  \end{align}
  and the triangle inequality that
  \begin{align}
    \bE \left|g_n(X)-p\right| &\le \bE \left|\frac{X}{n}-p\right| + \bE\left[\frac{X}{n}\mathbbm{1}\left(\frac{X}{n}\le e^2\Delta_n\right)\right]\\
    &\le\bE \left|\frac{X}{n}-p\right| + \bP(X\le e^2n\Delta_n)\\
    &\le \sqrt{\frac{p}{n}} + e^{-\frac{e^2(\ln n)^{2\eta}}{4}}.
  \end{align}
  Then the proof is completed by noticing that $e^{-\frac{e^2(\ln n)^{2\eta}}{4}}\le e^{-\frac{e^2\ln n}{4}}=n^{-\frac{e^2}{4}}$.
\end{proof}

\begin{lemma}
  If $X\sim \mathsf{B}(n,p), \Delta_n <p< 2e^2\Delta_n$, we have
  \begin{align}
     \bE \left|g_n(X)-p\right| \le \sqrt{\frac{p}{n}} +p.
  \end{align}
\end{lemma}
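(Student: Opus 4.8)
The plan is to follow the same template as the two preceding lemmas, but with one structural observation: in the intermediate regime $\Delta_n<p<2e^2\Delta_n$ the truncation threshold $e^2\Delta_n$ is comparable to $p$ itself, so one cannot hope for a decaying second term as in the other two cases; instead, the entire contribution of the truncated region will simply be absorbed into the trivial bound $p$.

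First I would split the expectation according to whether the empirical frequency exceeds the threshold. On the event $\{X/n>e^2\Delta_n\}$ we have $g_n(X)=X/n$, hence $|g_n(X)-p|=|X/n-p|$; on the complementary event $\{X/n\le e^2\Delta_n\}$ we have $g_n(X)=0$, hence $|g_n(X)-p|=p$. This yields
\begin{align}
  \bE\left|g_n(X)-p\right| = \bE\left[\left|\frac{X}{n}-p\right|\mathbbm{1}\left(\frac{X}{n}>e^2\Delta_n\right)\right] + p\cdot\bP\left(\frac{X}{n}\le e^2\Delta_n\right).
\end{align}
Equivalently, one may use the identity $g_n(X)=X/n-(X/n)\mathbbm{1}(X/n\le e^2\Delta_n)$ together with the triangle inequality and bound the truncated part by its full mean, $\bE[(X/n)\mathbbm{1}(X/n\le e^2\Delta_n)]\le\bE[X/n]=p$. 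Either way, dropping the indicator in the first term and the probability factor in the second gives $\bE|g_n(X)-p|\le \bE|X/n-p|+p$. I would then invoke the standard bound on the Binomial mean absolute deviation — either Lemma \ref{lem_bino}, or simply Cauchy--Schwarz, $\bE|X/n-p|\le \sqrt{\mathsf{Var}(X/n)}=\sqrt{p(1-p)/n}\le\sqrt{p/n}$ — to conclude $\bE|g_n(X)-p|\le\sqrt{p/n}+p$, as claimed.

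There is essentially no obstacle here; the only point worth flagging is that one should resist the temptation to control the truncated mass $\bE[(X/n)\mathbbm{1}(X/n\le e^2\Delta_n)]$ by $e^2\Delta_n$, since $e^2\Delta_n$ may exceed $p$ when $p<2e^2\Delta_n$, so that route would not produce the stated bound; bounding it by the full mean $p$ is both simpler and exactly what the statement permits. This regime is precisely the one that contributes the $\sum_{\Delta_n<p_i<2e^2\Delta_n}\left(\sqrt{p_i/n}+p_i\right)$ summand in the risk decomposition used to establish the achievability part of Theorem~\ref{th_lower_H}.
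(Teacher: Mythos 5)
Your proposal is correct and follows essentially the same route as the paper: split the expectation on the event $\{X/n>e^2\Delta_n\}$ versus its complement, drop the indicator and the probability factor to get $\bE|X/n-p|+p$, and bound the Binomial mean absolute deviation by $\sqrt{p(1-p)/n}\le\sqrt{p/n}$ via Lemma~\ref{lem_bino}. The extra remarks (the equivalent identity-based bound and the warning against using $e^2\Delta_n$ to control the truncated mass) are sound but do not change the argument.
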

\begin{proof}
  It is clear that
  \begin{align}
    \bE \left|g_n(X)-p\right| &\le \bE \left[\left|\frac{X}{n}-p\right|\mathbbm{1}\left(\frac{X}{n}> e^2\Delta_n\right)\right] \nonumber\\
    &\qquad + \bE \left[|0-p|\mathbbm{1}\left(\frac{X}{n}\le e^2\Delta_n\right)\right] \\
    & \le\bE \left|\frac{X}{n}-p\right| + p \\
    & \le \sqrt{\frac{p}{n}} + p.
\end{align}
\end{proof}

Combining these lemmas, the upper bound of Theorem \ref{th_lower_H} is given on the bottom of this page.

\subsection{Proof of the Lower Bounds in Theorem \ref{th_lower_S}, \ref{th_lower_H} and \ref{th_lower_H_mle}}
To obtain a lower bound for the minimax risk, an effective way is to use the Bayes risk to serve as the lower bound \cite{Lehmann1998theory}, where the prior can be arbitrarily chosen. Hence, our target is to find an unfavorable prior and compute the corresponding Bayes risk. For computational simplicity, in the proof we will assign the product of independent priors to the whole probability vector $P$ based on the Poissonized model $X_i\sim \mathsf{Poi}(np_i)$, and then use some concentration inequalities such as the Hoeffding bound to ensure that the vector $P$ is close to a probability distribution with overwhelming probability. Then the relationship between the minimax risk of the Poissonized model and that of the Multinomial model needs to be established. The rigorous proofs are detailed as follows.

\setcounter{equation}{64}
\subsubsection{Lower Bound in Theorem \ref{th_lower_S}}
We denote the uniform distribution on two points $\{\frac{1-\eta}{S},\frac{1+\eta}{S}\}$ by $\mu_0$, with $\eta\in(0,1)$ to be specified later, and assign the product measure $\mu_0^S$ to the probability vector $P$. Under the Poissonized model $X_i\sim \mathsf{Poi}(np_i)$, it is straightforward to see that all $p_i(1\le i\le S)$ are conditionally independent given $\mathbf{X}$. Hence, the Bayes estimator $\hat{P}^B(\bf{X})$ under prior $\mu_0^S$ can be decomposed into $\hat{P}^B({\bf X}) = (f(X_1),f(X_2),\cdots,f(X_S))$, for some function $f(\cdot)$. Then the Bayes risk is shown on the bottom of this page, where $d_{\text{TV}}(\mathsf{Poi}(u),\mathsf{Poi}(v))$ is the variational distance between two Poisson distributions:
\begin{align}
  & d_{\text{TV}}(\mathsf{Poi}(u),\mathsf{Poi}(v))\nonumber\\
   &\qquad \triangleq \sup_{A\subset\mathbb{N}} \left|\bP\left\{\mathsf{Poi}(u)\in A\right\}-\bP\left\{\mathsf{Poi}(v)\in A\right\}\right|.
%&\qquad =\frac{1}{2}\sum_{k=0}^\infty \left|\bP\left\{\mathsf{Poi}(u)=k\right\}-\bP\left\{\mathsf{Poi}(v)=k\right\}\right|.
\end{align}

Adell and Jodra \cite{Adell--Jodra2006Exact} gives an upper bound for this distance:
\begin{align}
  &d_{\text{TV}}(\mathsf{Poi}(t),\mathsf{Poi}(t+x)) \le \nonumber\\
  &\qquad \min\left\{1-e^{-x}, \sqrt{\frac{2}{e}}(\sqrt{t+x}-\sqrt{t})\right\}, \quad t,x\ge 0
\end{align}
then
\begin{align}
  &d_{\text{TV}}\left(\mathsf{Poi}\left(\frac{n(1-\eta)}{S}\right),\mathsf{Poi}\left(\frac{n(1+\eta)}{S}\right)\right)\\
  &\qquad\le \min\left\{1-\exp\left(-2\eta\cdot\frac{n}{S}\right), \sqrt{\frac{2n}{eS}}(\sqrt{1+\eta}-\sqrt{1-\eta})\right\}\\
  &\qquad\le \min\left\{1-\exp\left(-2\eta\cdot\frac{n}{S}\right), 2\eta\cdot\sqrt{\frac{n}{eS}}\right\}.
\end{align}

Hence, by setting
\begin{align}
  \eta = \min\left\{1, \frac{1}{4}\sqrt{\frac{eS}{n}}\right\}
\end{align}
we can obtain
\begin{align}\label{eq:R_B}
  R_B(S,n,\mu_0^S) \ge \begin{cases}
  \exp\left(-\frac{2n}{S}\right), &\frac{n}{S}\le \frac{e}{16} \\
  \frac{1}{8}\sqrt{\frac{eS}{n}}, &\frac{n}{S}> \frac{e}{16}.
\end{cases}
\end{align}

The combination of Lemma \ref{lem_relate} and \ref{lem_bayes} in Appendix A yields, for any $\zeta\in(0,1]$ and $\epsilon\in(0,\frac{\zeta}{2(1+\zeta)}]$,
\begin{align}\label{eq:R_B_1}
  &\inf_{\hat{P}}\sup_{P\in\mathcal{M}_S} \bE_P\|\hat{P}-P\|_1 \ge R_B(S,(1+\zeta)n,\mu_0^S) \nonumber \\
  &\qquad - \exp(-\frac{\zeta^2n}{24}) - 6\mu_0^S\left(\mathcal{M}_S(\epsilon)^c\right).
\end{align}
Setting $\epsilon=\frac{\zeta}{4\ln S}$, Lemma \ref{lem_hoeffding} in Appendix A yields
\begin{align}\label{eq:R_B_2}
  \mu_0^S\left(\mathcal{M}_S(\epsilon)^c\right) &= \mu_0^S\left\{\left|\sum_{i=1}^S p_i - \bE\left[\sum_{i=1}^S p_i\right]\right|\ge \epsilon\right\}\\
&\le 2\exp\left(-\frac{2\epsilon^2}{S\cdot(2/S)^2}\right) \\
&= 2\exp\left(-\frac{\zeta^2S}{32(\ln S)^2}\right).
\end{align}
The proof of Theorem \ref{th_lower_S} is completed by the combination of (\ref{eq:R_B}), (\ref{eq:R_B_1}) and (\ref{eq:R_B_2}).
\begin{figure*}[!b]
\normalsize
\setcounter{mytempeqncnt}{\value{equation}}
\setcounter{equation}{86}
\hrulefill
\begin{align}
  R_B({\bf X},H,n,\mu)&\triangleq \sum_{i=1}^{N({\bf X})} \left|\frac{\delta}{S'}-a_i\right| + \sum_{j=N({\bf X})+1}^{kS'} \left[\frac{(k-1)S'}{kS'-N({\bf X})}\left|0-a_j\right|+\frac{S'-N({\bf X})}{kS'-N({\bf X})}\left|\frac{\delta}{S'}-a_j\right|\right] + |1-\delta-a_S|\\
  &\ge \sum_{i=1}^{N({\bf X})} \left|\frac{\delta}{S'}-a_i\right| + \sum_{j=N({\bf X})+1}^{kS'} \left|\frac{(k-1)S'}{kS'-N({\bf X})}\cdot a_j+\frac{S'-N({\bf X})}{kS'-N({\bf X})}\cdot\left(\frac{\delta}{S'}-a_j\right)\right| + |1-\delta-a_S|\\
  &= \sum_{i=1}^{N({\bf X})} \left|\frac{\delta}{S'}-a_i\right| + \sum_{j=N({\bf X})+1}^{kS'} \left|\lambda a_j+\frac{\delta}{kS'-N({\bf X})}\cdot\left(1-\frac{N({\bf X})}{S'}\right)\right| + |1-\delta-a_S|\\
  &\ge \sum_{i=1}^{N({\bf X})} \lambda\left|a_i-\frac{\delta}{S'}\right| + \sum_{j=N({\bf X})+1}^{kS'} \left|\lambda a_j+\frac{\delta}{kS'-N({\bf X})}\cdot\left(1-\frac{N({\bf X})}{S'}\right)\right| + \lambda|a_S-1+\delta|\\
  &\ge \left|\sum_{i=1}^{N({\bf X})} \lambda\left(a_i-\frac{\delta}{S'}\right) + \sum_{j=N({\bf X})+1}^{kS'} \left(\lambda a_j+\frac{\delta}{kS'-N({\bf X})}\cdot\left(1-\frac{N({\bf X})}{S'}\right)\right) + \lambda(a_S-1+\delta) \right|\\
%  &= \left(1-\frac{N({\bf X})}{S'}\right)(\lambda+1)\delta\\
  &= \frac{2(k-1)S'}{kS'-N({\bf X})}\left(1-\frac{N({\bf X})}{S'}\right)\delta,
\end{align}
% Restore the current equation number.
\setcounter{equation}{\value{mytempeqncnt}}
% IEEE uses as a separator
% The spacer can be tweaked to stop underfull vboxes.
\vspace*{4pt}
\end{figure*}
\setcounter{equation}{75}
\subsubsection{Lower Bound in Theorem \ref{th_lower_H} and \ref{th_lower_H_mle}}
For the proof of the lower bound in Theorem \ref{th_lower_H}, we need a different prior. Specifically, we fix some $\delta\in(0,1)$ with its value to be specified later, and consider $S'$ with
\begin{align}\label{eq:S}
  \delta\ln S' - \delta\ln \delta - (1-\delta)\ln (1-\delta) = H.
\end{align}

Now define $S=kS'+1$ with parameter $k\ge 2$ to be specified later, and consider the following collection $\mathcal{N}_{S,H}$ of $S$-dimensional non-negative vectors: $P\in\mathcal{N}_{S,H}$ if and only if $p_{S}=1-\delta,p_i\in\{0,\frac{\delta}{S'}\},1\le i\le S-1=kS'$, and
\begin{align}
  \left|\left\{i: 1\le i\le kS', p_i = \frac{\delta}{S'}\right\}\right| = S'.
\end{align}

By the preceding two equalities, we can easily verify that $P\in\mathcal{N}_{S,H}$ implies $H(P)=H$. Now we assign the uniform distribution $\mu$ on $\mathcal{N}_{S,H}$ to the distribution vector $P$, and consider the Bayes estimator $\hat{P}^B$ of $P$ under the prior $\mu$. We first compute the posterior distribution of $P$ given an observation vector $\bf X$. Denote by $N({\bf X})$ the number of different symbols (excluding the last symbol) appearing in the observation vector $\bf X$, and we assume without loss of generality that the first $N({\bf X})$ symbols appear in the sample. Then it is straightforward to show that the posterior distribution of $P$ on $\bf X$ is uniform in the following set $\mathcal{N}_{\bf X}\subset \mathcal{N}_{S,H}$: $P\in\mathcal{N}_{\bf X}$ if and only if $p_S=1-\delta, p_i=\frac{\delta}{S'}, 1\le i\le N({\bf X}), p_j\in\{0,\frac{\delta}{S'}\},N({\bf X})+1\le j\le kS'$, and
\begin{align}
  \left|\left\{j: N({\bf X})+1\le j\le kS', p_j = \frac{\delta}{S'}\right\}\right| = S' - N({\bf X}).
\end{align}

Hence, the Bayes estimator $\hat{P}^B({\bf X})=(a_1,a_2,\cdots,a_S)$ should minimize the posterior $\ell_1$ risk given $\bf X$ expressed as
\begin{align}\label{eq:minimization}
  &R_B({\bf X},H,n,\mu)\triangleq \sum_{i=1}^{N({\bf X})} \left|\frac{\delta}{S'}-a_i\right| + |1-\delta-a_S| \nonumber\\
  & \quad + \sum_{j=N({\bf X})+1}^{kS'} \left[\frac{(k-1)S'\left|a_j\right|}{kS'-N({\bf X})}+\frac{S'-N({\bf X})}{kS'-N({\bf X})}\left|\frac{\delta}{S'}-a_j\right|\right] ,
\end{align}
and the solution is $a_1=\cdots=a_{N({\bf X})}=\frac{\delta}{S'}, a_{N({\bf X})}=\cdots=a_{kS'}=0$ and $a_{S}=1-\delta$, and
\begin{align}\label{eq:bayes_X_1}
 R_B({\bf X},H,n,\mu) = \left(1-\frac{N({\bf X})}{S'}\right)\delta.
\end{align}

In light of this, the Bayes risk can be expressed as $R_B(H,n,\mu)=\bE[R_B({\bf X},H,n,\mu)]$, where the expectation is taken with respect to $\bf X$, and by (\ref{eq:bayes_X_1}) we only need to compute $\bE N({\bf X})$. Due to the symmetry of the Multinomial distribution, we can assume without loss of generality that ${\bf X}\sim \mathsf{Multi}(n;\frac{\delta}{S'},\cdots,\frac{\delta}{S'},0,\cdots,0,1-\delta)$, then
\begin{equation}
\begin{split}
  \bE N({\bf X}) &= \bE\left[\sum_{i=1}^{S'} \mathbbm{1}(X_i>0)\right] = \sum_{i=1}^{S'} \mathbb{P}(X_i>0)\\
  & = S'\cdot \left[1-\left(1-\frac{\delta}{S'}\right)^{n}\right] \le n\delta
\end{split}
\end{equation}
which yields
\begin{equation}
\begin{split}
  R_B(H,n,\mu)&=\bE[R_B({\bf X},H,n,\mu)] \\
  &= \left(1-\frac{\bE N({\bf X})}{S'}\right)\delta = \left(1-\frac{n\delta}{S'}\right)\delta.
\end{split}
\end{equation}

Fix $c\in(0,1)$, we set $\delta = \frac{cH}{\ln n} \le c$, and
\begin{align}
  S' &= \delta\exp\left(\frac{H}{\delta}\right)\left((1-\delta)^{\frac{1}{\delta}}\right)^{1-\delta} \\
  & \ge \frac{cH}{\ln n}\cdot n^{\frac{1}{c}}\cdot (1-c)^{\frac{1}{c}}.
\end{align}
Since the minimax risk is lower bounded by any Bayes risk, we have
\begin{align}
  \inf_{\hat{P}}\sup_{P:H(P)\le H} \bE_P\|\hat{P}-P\|_1 &\ge  R_B(H,n,\mu)\\
  &\ge \frac{cH}{\ln n}\cdot \left(1-n^{1-\frac{1}{c}}(1-c)^{-\frac{1}{c}}\right),
\end{align}
which completes the proof of the lower bound in Theorem \ref{th_lower_H}.

\setcounter{equation}{92}
For the lower bound in Theorem \ref{th_lower_H_mle}, note that we need to constrain that the Bayes estimator $\hat{P}^B$ form a probability mass, i.e., $\sum_{i=1}^S a_i=1$ in the minimization process of (\ref{eq:minimization}). Defining $\lambda\triangleq \frac{(k-2)S'+N({\bf X})}{kS'-N({\bf X})} \in [0,1]$, the derivations on the bottom of this page show that the solution becomes $a_1=\cdots=a_{N({\bf X})}=\frac{\delta}{S'}, a_{N({\bf X})}=\cdots=a_{kS'}=\frac{\delta(S'-N({\bf X}))}{S'(kS'-N({\bf X}))}$ and $a_{S}=1-\delta$. Hence, the corresponding Bayes risk given $\bf X$ is
\begin{align}\label{eq:bayes_X_2}
 R_B({\bf X},H,n,\mu) &= \frac{2(k-1)S'}{kS'-N({\bf X})}\left(1-\frac{N({\bf X})}{S'}\right)\delta \\
 & \ge \frac{2(k-1)\delta}{k}\left(1-\frac{N({\bf X})}{S'}\right).
\end{align}

Applying the similar steps, we can show that the overall Bayes risk is
\begin{align}
  R_B(H,n,\mu) &= \bE R_B({\bf X},H,n,\mu) \\
  &\ge \frac{2(k-1)cH}{k\ln n}\cdot \left(1-n^{1-\frac{1}{c}}(1-c)^{-\frac{1}{c}}\right).
\end{align}
Since the minimax risk is lower bounded by any Bayes risk, we have
\begin{align}
  &\inf_{\hat{P}\in\mathcal{M}}\sup_{P:H(P)\le H} \bE_P\|\hat{P}-P\|_1 \ge  R_B(H,n,\mu)\\
   &\qquad \qquad \ge \frac{2(k-1)cH}{k\ln n}\cdot\left(1-n^{1-\frac{1}{c}}(1-c)^{-\frac{1}{c}}\right),
\end{align}
which completes the proof of the lower bound in Theorem \ref{th_lower_H_mle} by letting $k\to \infty$.

\section{Acknowledgement}
We would like to thank I. Sason, the associate editor, and three anonymous reviewers for their very helpful suggestions. J. Jiao is partially supported by a Stanford Graduate Fellowship.

\appendices
\section{Auxiliary Lemmas}\label{sec_app_A}
The following lemma gives a sharp estimate of the Binomial mean absolute deviation.
\begin{lemma}\label{lem_bino}
  \cite{Berend2013sharp} For $X\sim\mathsf{B}(n,p)$, we have
  \begin{align}
    \bE \left|\frac{X}{n}-p\right| \le \min\left\{\sqrt{\frac{p(1-p)}{n}},2p\right\}.
  \end{align}
  Moreover, for $p<1/n$, there is an identity
  \begin{align}
    \bE\left|\frac{X}{n}-p\right| = 2p(1-p)^n.
  \end{align}
\end{lemma}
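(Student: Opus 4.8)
The plan is to treat the two assertions of the lemma separately. \emph{(Upper bound.)} For the term $\sqrt{p(1-p)/n}$ I would use Jensen's inequality (equivalently Cauchy--Schwarz) together with $\bE[X/n]=p$:
\begin{align}
\bE\left|\frac{X}{n}-p\right| \le \left(\bE\left(\frac{X}{n}-p\right)^2\right)^{1/2} = \left(\mathsf{Var}\left(\frac{X}{n}\right)\right)^{1/2} = \sqrt{\frac{p(1-p)}{n}}.
\end{align}
For the term $2p$, I would note that $X/n\ge 0$ and $p\ge 0$ imply $|X/n-p|\le X/n+p$ pointwise, so $\bE|X/n-p|\le\bE[X/n]+p=2p$. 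Taking the smaller of the two bounds establishes the first display of the lemma.

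\emph{(Exact identity for $p<1/n$.)} Here I would expand the expectation over the Binomial pmf and exploit the sign pattern of $k/n-p$. Since $p<1/n$, for every integer $k\ge 1$ we have $k/n\ge 1/n>p$, hence $k/n-p>0$; the only term with $k/n-p<0$ is $k=0$, which contributes $p\,(1-p)^n$. Therefore
\begin{align}
\bE\left|\frac{X}{n}-p\right| &= p(1-p)^n + \sum_{k=1}^{n}\left(\frac{k}{n}-p\right)\binom{n}{k}p^k(1-p)^{n-k} \nonumber \\
&= p(1-p)^n + \bE\left[\frac{X}{n}\right] - p\left(1-(1-p)^n\right) \nonumber \\
&= p(1-p)^n + p - p\left(1-(1-p)^n\right) = 2p(1-p)^n,
\end{align}
where I used that the $k=0$ term contributes $0$ to $\sum_{k}(k/n)\binom{n}{k}p^k(1-p)^{n-k}=\bE[X/n]$, and that $\sum_{k=1}^{n}\binom{n}{k}p^k(1-p)^{n-k}=1-(1-p)^n$.

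An alternative to the direct summation is to invoke De Moivre's closed form for the Binomial mean absolute deviation (recounted by Diaconis and Zabell \cite{Diaconis1991closed}): with $m=\lfloor np\rfloor$,
\begin{align}
\bE\left|\frac{X}{n}-p\right| = \frac{2(1-p)(m+1)}{n}\binom{n}{m+1}p^{m+1}(1-p)^{n-m-1},
\end{align}
which in the regime $p<1/n$ has $m=0$ and collapses to $2p(1-p)^n$. I would keep the self-contained summation as the main argument and relegate the De Moivre formula to a remark, since that is how \cite{Berend2013sharp} and others package the identity. I do not anticipate any real obstacle: the whole lemma is elementary, and the only point meriting a word of care is the verification that $k/n=p$ cannot occur for an integer $k\ge 1$ when $p<1/n$, so that the sign split in the second step is exhaustive and strict.
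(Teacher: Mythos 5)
Your proof is correct. Note that the paper itself offers no proof of this lemma: it is quoted verbatim from Berend and Kontorovich \cite{Berend2013sharp}, so there is no in-paper argument to compare against, and a self-contained derivation like yours is exactly what a reader would want. Both halves check out: the $\sqrt{p(1-p)/n}$ bound via Cauchy--Schwarz and the $2p$ bound via $|X/n-p|\le X/n+p$ together give the stated minimum, and your summation for $p<1/n$ correctly isolates the single negative term at $k=0$, yielding $p(1-p)^n + \bigl(p - p(1-(1-p)^n)\bigr) = 2p(1-p)^n$; this is the same computation one gets by specializing De Moivre's closed form to $\lfloor np\rfloor=0$, as you note. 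One small remark: the "point meriting care" you flag at the end is actually immaterial --- if $k/n$ happened to equal $p$ for some $k\ge1$, that term would contribute zero whether you place it in the positive or negative group, so the identity needs only $k/n-p\ge 0$ for $k\ge1$, which $p<1/n$ guarantees (and the case $p=0$ is trivial on both sides).
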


The following lemma gives some tail bounds for Poisson or Binomial random variables.
\begin{lemma}\label{lem_chernoff}
\cite[Exercise 4.7]{mitzenmacher2005probability} If $X\sim \mathsf{Poi}(\lambda)$ or $X\sim \mathsf{B}(n,\frac{\lambda}{n})$, then for any $0<\delta<1$, we have
\begin{align}
\bP(X \geq (1+\delta) \lambda) & \leq \left( \frac{e^\delta}{(1+\delta)^{1+\delta}} \right)^\lambda, \\
\bP(X \leq (1-\delta) \lambda) & \leq  \left( \frac{e^{-\delta}}{(1-\delta)^{1-\delta}} \right)^\lambda\leq  e^{-\delta^2\lambda/2}.
\end{align}
\end{lemma}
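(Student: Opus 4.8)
The plan is the standard Chernoff (exponential-moment) argument, organized so that the Poisson and Binomial cases are handled simultaneously. The key observation is a uniform bound on the moment generating function: for every real $t$,
\begin{align}
  \bE\left[e^{tX}\right] \le \exp\left(\lambda(e^t-1)\right),
\end{align}
with equality when $X\sim\mathsf{Poi}(\lambda)$ (this is the classical Poisson MGF), and, when $X\sim\mathsf{B}(n,\lambda/n)$, because $\bE[e^{tX}]=\left(1+\tfrac{\lambda}{n}(e^t-1)\right)^n$ and $1+x\le e^x$ applied with $x=\tfrac{\lambda}{n}(e^t-1)$ gives exactly the same upper bound. So from this point on the two cases are identical.

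For the upper tail I would apply Markov's inequality to $e^{tX}$ with $t>0$: $\bP(X\ge(1+\delta)\lambda)=\bP(e^{tX}\ge e^{t(1+\delta)\lambda})\le e^{-t(1+\delta)\lambda}\bE[e^{tX}]\le \exp\left(\lambda(e^t-1)-t(1+\delta)\lambda\right)$. Minimizing the exponent over $t$ gives $e^t=1+\delta$, i.e.\ $t=\ln(1+\delta)>0$ (valid since $\delta>0$), and substituting yields $\bP(X\ge(1+\delta)\lambda)\le \exp\left(\lambda\delta-(1+\delta)\lambda\ln(1+\delta)\right)=\left(\tfrac{e^\delta}{(1+\delta)^{1+\delta}}\right)^\lambda$. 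For the lower tail I would do the symmetric computation with $t<0$: $\bP(X\le(1-\delta)\lambda)=\bP(e^{tX}\ge e^{t(1-\delta)\lambda})\le e^{-t(1-\delta)\lambda}\bE[e^{tX}]\le \exp\left(\lambda(e^t-1)-t(1-\delta)\lambda\right)$, and the MGF bound still holds for $t<0$. Minimizing over $t$ gives $e^t=1-\delta$, i.e.\ $t=\ln(1-\delta)<0$ (valid since $0<\delta<1$), producing $\bP(X\le(1-\delta)\lambda)\le\left(\tfrac{e^{-\delta}}{(1-\delta)^{1-\delta}}\right)^\lambda$.

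It remains to prove the simplified form $\left(\tfrac{e^{-\delta}}{(1-\delta)^{1-\delta}}\right)^\lambda\le e^{-\delta^2\lambda/2}$, which after taking logarithms and dividing by $\lambda$ is equivalent to $(1-\delta)\ln(1-\delta)\ge -\delta+\tfrac{\delta^2}{2}$. I would set $f(\delta)\triangleq (1-\delta)\ln(1-\delta)+\delta-\tfrac{\delta^2}{2}$, note $f(0)=0$, and compute $f'(\delta)=-\ln(1-\delta)-\delta$; since $-\ln(1-\delta)=\sum_{j\ge1}\delta^j/j\ge\delta$ for $\delta\in[0,1)$, we get $f'(\delta)\ge0$, hence $f(\delta)\ge0$ on $[0,1)$, which gives the claim.

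There is no real obstacle here; the only points that deserve a line of care are (i) verifying that the common MGF bound $\bE[e^{tX}]\le e^{\lambda(e^t-1)}$ is valid for \emph{both} signs of $t$ and for both distributions (it is, since $1+x\le e^x$ holds for all $x$), and (ii) checking that the optimizing $t$ lands in the correct half-line ($t>0$ for the upper tail, $t<0$ for the lower tail), which is guaranteed precisely by the hypothesis $0<\delta<1$.
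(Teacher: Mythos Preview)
Your proof is correct and is precisely the standard Chernoff argument one would expect for this exercise. Note, however, that the paper does not actually prove this lemma: it is stated in Appendix~A as a citation of \cite[Exercise 4.7]{mitzenmacher2005probability} and invoked without proof, so there is no ``paper's own proof'' to compare against. Your write-up would serve as a complete self-contained proof should one be desired.
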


The following lemma presents the Hoeffding bound.
\begin{lemma}\label{lem_hoeffding}
  \cite{Hoeffding1963probability} For independent and identically distributed random variables $X_1,\cdots,X_n$ with $a\le X_i\le b$ for $1\le i\le n$, denote $S_n=\sum_{i=1}^n X_i$, we have for any $t>0$,
  \begin{align}
    \bP\left\{|S_n-\bE[S_n]|\ge t\right\} \le 2\exp\left(-\frac{2t^2}{n(b-a)^2}\right).
  \end{align}
\end{lemma}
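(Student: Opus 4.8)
The plan is to prove Lemma~\ref{lem_hoeffding} by the classical Cram\'er--Chernoff exponential moment method, reducing everything to a uniform bound on the moment generating function of a centered bounded random variable. First I would fix $\lambda>0$, apply Markov's inequality to the exponentiated deviation, and factorize using independence:
\[
  \bP\{S_n - \bE[S_n]\ge t\} \le e^{-\lambda t}\,\bE\big[e^{\lambda(S_n-\bE[S_n])}\big] = e^{-\lambda t}\prod_{i=1}^n \bE\big[e^{\lambda Y_i}\big],
\]
where $Y_i := X_i-\bE[X_i]$ has mean zero and takes values in an interval of length $b-a$. Thus the problem is reduced to controlling each factor $\bE[e^{\lambda Y_i}]$.

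The core technical step, and the one I expect to be the main obstacle, is Hoeffding's lemma: if $Y$ is a mean-zero random variable with $a'\le Y\le b'$ almost surely, then $\bE[e^{\lambda Y}]\le \exp(\lambda^2(b'-a')^2/8)$. I would prove this by convexity of $y\mapsto e^{\lambda y}$: on $[a',b']$ we have $e^{\lambda y}\le \frac{b'-y}{b'-a'}e^{\lambda a'}+\frac{y-a'}{b'-a'}e^{\lambda b'}$, so taking expectations and using $\bE[Y]=0$ gives $\bE[e^{\lambda Y}]\le e^{\phi(h)}$ with $h:=\lambda(b'-a')$, $p:=-a'/(b'-a')\in[0,1]$, and $\phi(h):=-ph+\ln(1-p+pe^h)$. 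A direct computation shows $\phi(0)=\phi'(0)=0$ and $\phi''(h)=u(1-u)\le \tfrac14$ where $u:=pe^h/(1-p+pe^h)\in(0,1)$, so a second-order Taylor expansion with remainder yields $\phi(h)\le h^2/8$, which is exactly the claimed bound. Applying it to each $Y_i$ (whose range has length at most $b-a$) gives $\bE[e^{\lambda(S_n-\bE[S_n])}]\le \exp\!\big(n\lambda^2(b-a)^2/8\big)$.

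Combining the two estimates, $\bP\{S_n-\bE[S_n]\ge t\}\le \exp\!\big(-\lambda t+n\lambda^2(b-a)^2/8\big)$ for every $\lambda>0$; minimizing the exponent over $\lambda$, with optimum $\lambda^\star=4t/(n(b-a)^2)$, produces the one-sided bound $\exp\!\big(-2t^2/(n(b-a)^2)\big)$. Finally I would run the identical argument for $-S_n$ (equivalently, replace each $Y_i$ by $-Y_i$, which has the same range length) to control $\bP\{S_n-\bE[S_n]\le -t\}$ by the same quantity, and a union bound over the two one-sided events yields the factor of $2$ in the stated two-sided inequality. The only place that requires genuine care is the verification that $\phi''\le \tfrac14$; the remaining steps are routine bookkeeping.
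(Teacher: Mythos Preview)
Your proposal is correct and is exactly the standard Cram\'er--Chernoff argument combined with Hoeffding's lemma; the optimization over $\lambda$ and the union bound for the two-sided tail are carried out properly, and your handling of the $\phi''\le 1/4$ step is clean. Note, however, that the paper does not give its own proof of this lemma: it is stated in Appendix~\ref{sec_app_A} as an auxiliary result and simply attributed to \cite{Hoeffding1963probability}, so there is no ``paper's proof'' to compare against beyond the original reference, which uses precisely the argument you outline.
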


A non-negative loss function $l(\cdot)$ on $\mathbb{R}^p$ is called \emph{bowl-shaped} iff $l(u)=l(-u)$ for all $u\in\mathbb{R}^p$ and for any $c\ge0$, the sublevel set $\{u: l(u)\le c\}$ is convex. The following theorem is one of the key theorems in the definition of asymptotic efficiency.
\begin{theorem}\label{lem_LAM}
  \cite[Thm. 8.11]{Vandervaart2000} Let the experiment $(P_\theta,\theta\in\Theta)$ be differentiable in quadratic mean at $\theta$ with nonsingular Fisher information matrix $I_\theta$. Let $\psi(\cdot)$ be differentiable at $\theta$. Let $\{T_n\}$ be any estimator sequence in the experiments $(P_\theta^n,\theta\in\mathbb{R}^k)$. Then for any bowl-shaped loss function $l$,
\begin{align}
&\sup_I\liminf_{n\to\infty}\sup_{h\in I}\bE_{\theta+\frac{h}{\sqrt{n}}}l\left(\sqrt{n}\left(T_n-\psi
    \left(\theta+\frac{h}{\sqrt{n}}\right)\right)\right)\nonumber\\
    &\qquad\qquad \ge \bE l(X),
\end{align}
where $\mathcal{L}(X)=\mathcal{N}(0,\psi'(\theta)I_\theta^{-1}\psi'(\theta)^T)$, and the first supremum is taken over all finite subsets $I\subset \mathbb{R}^k$.
\end{theorem}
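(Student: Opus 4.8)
The statement is the H\'ajek--Le Cam local asymptotic minimax theorem, so the plan is to run the standard argument (see \cite[Chap.~8]{Vandervaart2000}): pass to the canonical Gaussian shift limit experiment via local asymptotic normality, transfer an arbitrary estimator sequence into the limit experiment without gaining efficiency, and then invoke the minimax lower bound for the Gaussian location model. As a preliminary reduction, differentiability of $\psi$ gives $\psi(\theta+h/\sqrt n)=\psi(\theta)+\psi'(\theta)h/\sqrt n+o(1/\sqrt n)$, so after centering at $\psi(\theta)$ it suffices to bound below the local asymptotic maximum risk of estimating the linear functional $\psi'(\theta)h$ of the local parameter $h\in\mathbb R^k$. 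First I would use differentiability in quadratic mean at $\theta$ to obtain, under $P_\theta^n$, the LAN expansion
\begin{align}
  \log\frac{dP_{\theta+h/\sqrt n}^n}{dP_\theta^n}=h^{T}\Delta_{n,\theta}-\tfrac12 h^{T}I_\theta h+o_{P_\theta^n}(1),
\end{align}
with $\Delta_{n,\theta}$ asymptotically $\mathcal N(0,I_\theta)$; since $I_\theta$ is nonsingular, the limit experiment is the nondegenerate Gaussian shift $\bigl(\mathcal N(h,I_\theta^{-1}):h\in\mathbb R^k\bigr)$, in which one observes a single draw $Y\sim\mathcal N(h,I_\theta^{-1})$.

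The core is the transfer step, carried out with the bounded lower semicontinuous truncations $l\wedge M$ so that no integrability issue arises. Fix a ball $I_m=\{\,h:\|h\|\le m\,\}$, choose a subsequence realising $\liminf_n\sup_{h\in I_m}(\cdot)$, and extract further so that $\bigl(\sqrt n(T_n-\psi(\theta)),\Delta_{n,\theta}\bigr)$ converges in distribution under $P_\theta^n$ to some pair $(S^{(m)},\Delta^{(m)})$ with $\Delta^{(m)}\sim\mathcal N(0,I_\theta)$. Contiguity (Le Cam's first lemma) extends tightness to the shifted laws, and Le Cam's third lemma identifies, for every $h$, the weak limit of $\sqrt n(T_n-\psi(\theta))$ under $P_{\theta+h/\sqrt n}^n$ as the law of a randomised estimator $S_h^{(m)}$ of $\psi'(\theta)h$ in the limiting Gaussian shift experiment. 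Feeding this and $\sqrt n(T_n-\psi(\theta+h/\sqrt n))=\sqrt n(T_n-\psi(\theta))-\psi'(\theta)h+o(1)$ into the portmanteau lemma yields, for each $h\in I_m$, $\liminf_n\sup_{h'\in I_m}\bE_{\theta+h'/\sqrt n}\bigl[(l\wedge M)(\cdots)\bigr]\ge\bE\,\bigl[(l\wedge M)(S_h^{(m)}-\psi'(\theta)h)\bigr]$, hence the same with $\sup_{h\in I_m}$ on the right; that is, the left-hand side of the theorem dominates the $I_m$-restricted maximum risk of the limit procedure $S^{(m)}$.

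It remains to compare this restricted risk with the unrestricted minimax risk of the Gaussian location model. Put a diffuse prior $\pi_\tau=\mathcal N(0,\tau^2\mathrm{Id})$ on $h$; the posterior of $\psi'(\theta)h$ given $Y$ is Gaussian and symmetric about its mean, so by Anderson's lemma --- here bowl-shapedness of $l$, i.e.\ convexity and symmetry of its sublevel sets, is exactly what is used --- the Bayes action under $l\wedge M$ is that posterior mean, with Bayes risk $\bE\,\bigl[(l\wedge M)(W_\tau)\bigr]$, where $W_\tau$ is centered Gaussian with covariance tending to $\psi'(\theta)I_\theta^{-1}\psi'(\theta)^{T}$ as $\tau\to\infty$. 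Hence every estimator's $\pi_\tau$-average risk is at least $\bE\,\bigl[(l\wedge M)(W_\tau)\bigr]$; applying this to $S^{(m)}$ and splitting its $\pi_\tau$-average over $I_m$ and $I_m^c$ (bounding the tail by $M\pi_\tau(I_m^c)$) gives that its $I_m$-restricted maximum risk is at least $\bE\,\bigl[(l\wedge M)(W_\tau)\bigr]-M\pi_\tau(I_m^c)$. Chaining with the transfer step, the left-hand side of the theorem is $\ge\bE\,\bigl[(l\wedge M)(W_\tau)\bigr]-M\pi_\tau(I_m^c)$ for all $m,\tau,M$; letting $m\to\infty$, then $\tau\to\infty$, then $M\to\infty$ (using monotone convergence $\bE[(l\wedge M)(X)]\uparrow\bE[l(X)]$) delivers the claimed bound $\bE\,l(X)$.

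I expect the transfer step, together with this interleaving of limits, to be the main obstacle: converting an arbitrary --- possibly irregular or super-efficient --- estimator sequence into a bona fide randomised procedure in the limit experiment requires the joint weak-convergence extraction along subsequences, contiguity to control the shifted measures, and Le Cam's third lemma to identify the limit laws, while the unbounded bowl-shaped loss forces one to work with $l\wedge M$ and to control prior tails carefully. Nonsingularity of $I_\theta$ is used throughout, so that the limit is a genuine $k$-dimensional Gaussian shift rather than degenerate.
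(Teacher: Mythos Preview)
Your outline is a faithful sketch of the standard proof of the local asymptotic minimax theorem as given in \cite[Thm.~8.11]{Vandervaart2000}: LAN expansion to identify the Gaussian shift limit, subsequence/tightness plus contiguity and Le Cam's third lemma to transfer $T_n$ into a randomised estimator in the limit experiment, then Anderson's lemma with a sequence of Gaussian priors to lower-bound the Bayes risk, with the $l\wedge M$ truncation to handle unbounded loss. There is no substantive gap.

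However, note that the paper does not prove this theorem at all: it is listed in Appendix~\ref{sec_app_A} as an auxiliary result quoted verbatim from \cite[Thm.~8.11]{Vandervaart2000}, with no proof given. The paper only \emph{applies} it (in the proof of Corollary~\ref{cor_n_S}) by specializing to the multinomial family, computing the Fisher information and its inverse explicitly, and choosing $l(x)=\|x\|_1$ and $\psi$ the identity-plus-last-coordinate map. So there is nothing to compare your proposal against in the paper itself; you have supplied the textbook argument that the authors simply cite.
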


The next lemma relates the minimax risk under the Poissonized model of an approximate probability distribution and that under the Multinomial model of a true probability distribution, where the set of approximate probability distribution is defined by
\begin{align}
  \mathcal{M}_S(\epsilon) \triangleq \left\{P=(p_1,p_2,\cdots,p_S): p_i\ge 0, \left|\sum_{i=1}^S p_i-1\right|<\epsilon \right\}.
\end{align}

We define the minimax risk for Multinomial model with $n$ observations on support size $S$ for estimating $P$ as
\begin{align}
  R(S,n) = \inf_{\hat{P}}\sup_{P\in\mathcal{M}_S}\bE_{\textrm{Multinomial}}\|\hat{P}-P\|_1,
\end{align}
and the corresponding minimax risk for Poissonized model for estimating an approximate distribution as
\begin{align}
  R_P(S,n,\epsilon) = \inf_{\hat{P}}\sup_{P\in\mathcal{M}_S(\epsilon)}\bE_{\textrm{Poissonized}}\|\hat{P}-P\|_1.
\end{align}

\begin{lemma}\label{lem_relate}
  The minimax risks under the Poissonized model and the Multinomial model are related via the following inequality: for any $\zeta\in(0,1]$ and $0<\epsilon\le\frac{\zeta}{2(1+\zeta)}$, we have
  \begin{align}
    R(S,n) \ge R_P(S,(1+\zeta)n,\epsilon) - \exp(-\frac{\zeta^2n}{24}) - \epsilon.
  \end{align}
\end{lemma}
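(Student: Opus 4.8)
The plan is to establish the equivalent inequality $R_P(S,(1+\zeta)n,\epsilon)\le R(S,n)+\epsilon+\exp(-\zeta^2 n/24)$ by a Poissonization/subsampling reduction. Concretely, I would take an \emph{arbitrary} estimator $\hat P$ for the Multinomial model with $n$ observations and build from it a (possibly randomized) estimator $\tilde P$ for the Poissonized model with $(1+\zeta)n$ observations over the enlarged class $\mathcal M_S(\epsilon)$, whose worst-case $\ell_1$ risk exceeds that of $\hat P$ by at most $\epsilon+\exp(-\zeta^2 n/24)$; taking the infimum over $\hat P$ on both sides then yields the claim.

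The construction rests on the standard description of the Poissonized model. Fix $P\in\mathcal M_S(\epsilon)$ and write $s\triangleq\sum_{i=1}^S p_i\in(1-\epsilon,1+\epsilon)$. Then the observed count vector $(X_1,\dots,X_S)$ with $X_i\sim\mathsf{Poi}((1+\zeta)n p_i)$ independent has the same law as: draw $N\sim\mathsf{Poi}((1+\zeta)ns)$, then draw $N$ i.i.d.\ samples from the genuine probability distribution $P/s\in\mathcal M_S$, and record the per-symbol counts. Using external randomness one may reconstruct from the observed counts an exchangeable length-$N$ sequence with this law. On the event $\{N\ge n\}$, retain a uniformly random length-$n$ subsequence; being a subsequence of an i.i.d.\ sequence, its count vector $\mathbf Y$ is exactly $\mathsf{Multi}(n;P/s)$, and we set $\tilde P\triangleq\hat P(\mathbf Y)$. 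On the rare event $\{N<n\}$ we set $\tilde P\triangleq 0$.

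To bound the risk, on $\{N\ge n\}$ the triangle inequality gives $\|\tilde P-P\|_1\le\|\hat P(\mathbf Y)-P/s\|_1+\|P/s-P\|_1$, and $\|P/s-P\|_1=\sum_i p_i|1/s-1|=|1-s|<\epsilon$; since $\mathbf Y\sim\mathsf{Multi}(n;P/s)$ with $P/s\in\mathcal M_S$ even after conditioning on $\{N\ge n\}$, bounding the indicator of $\{N\ge n\}$ by $1$ shows the contribution of this event is at most $\sup_{Q\in\mathcal M_S}\bE_{\mathrm{Mult},n}\|\hat P-Q\|_1+\epsilon$. On $\{N<n\}$ we simply have $\|\tilde P-P\|_1=\|P\|_1=s<2$, so this event contributes at most $2\,\bP(N<n)$. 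Since $\epsilon\le\zeta/(2(1+\zeta))$ we get $(1+\zeta)s\ge(1+\zeta)(1-\epsilon)\ge 1+\zeta/2$, so $N$ is Poisson with mean at least $(1+\zeta/2)n$; the Chernoff bound for Poisson variables (Lemma \ref{lem_chernoff}) gives $\bP(N<n)\le\exp(-\Theta(\zeta^2 n))$, and a routine accounting of constants (using $\zeta\le1$) absorbs the factor and yields the stated $\exp(-\zeta^2 n/24)$ term. Summing the two contributions, taking the supremum over $P\in\mathcal M_S(\epsilon)$, and then the infimum over $\hat P$ finishes the proof.

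The main obstacle is conceptual rather than computational: one has to spot the coupling that turns the surplus $(1+\zeta)$-factor of Poissonized observations into a bona fide $\mathsf{Multi}(n;\cdot)$ sample from the \emph{normalized} distribution $P/s$, and then keep track of exactly two sources of slack — the normalization gap $\|P/s-P\|_1\le\epsilon$ and the sample-deficiency event $\{N<n\}$. Once the coupling is in place the remainder is routine, the only mild delicacy being the constant-chasing in the Poisson tail estimate so as to land on $\exp(-\zeta^2 n/24)$ precisely under the hypothesis $\epsilon\le\zeta/(2(1+\zeta))$.
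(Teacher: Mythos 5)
Your reduction is, at its core, the same one the paper uses: both arguments convert the Poissonized sample with inflated intensity $(1+\zeta)n$ into a genuine Multinomial sample from the normalized distribution $P/s$, pay $\|P/s-P\|_1\le\epsilon$ for the normalization, and pay a Poisson lower-tail probability for the event that the total count is too small. The paper does this without auxiliary randomization by conditioning on the total count $n'$ and applying, for each realized value $m$, a $\delta$-near-minimax Multinomial estimator $\hat P_M(\cdot,S,m)$, then using $R(S,m)\le 1$ on the deficient event and monotonicity of $R(S,m)$ in $m$; you instead fix a single sample-size-$n$ estimator and subsample down to exactly $n$ on $\{N\ge n\}$. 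That variant is legitimate: the counts $\mathbf{Y}$ of a uniformly random length-$n$ subsequence are indeed $\mathsf{Multi}(n;P/s)$, even conditionally on $\{N\ge n\}$, and the external randomness is harmless since the $\ell_1$ loss is convex, so the randomized estimator can be replaced by its average given the data without increasing the risk.

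The one step that does not go through as written is the final constant accounting. From Lemma \ref{lem_chernoff}, with $\lambda=(1+\zeta)ns\ge (2+\zeta)n/2$ one gets $\bP(N<n)\le\exp(-\zeta^2 n/(4(2+\zeta)))\le\exp(-\zeta^2 n/12)$, and the claim that the factor $2$ in ``$2\,\bP(N<n)$'' can be absorbed into $\exp(-\zeta^2 n/24)$ fails whenever $\zeta^2 n<24\ln 2$; as stated your argument only yields the lemma with $2\exp(-\zeta^2 n/12)$ in place of $\exp(-\zeta^2 n/24)$ (harmless for the downstream corollaries, but not the stated non-asymptotic bound). The repair stays entirely inside your construction: on $\{N<n\}$ the loss of outputting $0$ is exactly $s\le 1+\epsilon$, so writing the risk as $\bE[\|\hat P(\mathbf{Y})-P\|_1\mathbbm{1}(N\ge n)]+s\,\bP(N<n)$ and bounding the first term by $\bigl(\sup_{Q\in\mathcal{M}_S}\bE\|\hat P-Q\|_1+\epsilon\bigr)\bP(N\ge n)$ lets the two $\epsilon$-pieces merge into a single $\epsilon$, leaving the deficiency event with coefficient $1$; then $\bP(N<n)\le\exp(-\zeta^2 n/12)\le\exp(-\zeta^2 n/24)$ closes the proof. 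This is essentially how the paper lands on the constant $24$: its deficiency term is $\bP(n'\le n/(1+\zeta))$ with coefficient $1$, bounded via the same Chernoff estimate by $\exp(-\zeta^2 n/(4(1+\zeta)(2+\zeta)))\le\exp(-\zeta^2 n/24)$ for $\zeta\le 1$.
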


The following lemma establishes the relationship of the $R_P(S,n,\epsilon)$ and the Bayes risk under some prior $\mu$.
\begin{lemma}\label{lem_bayes}
  Assigning prior $\mu$ to a non-negative vector $P$, denote the corresponding Bayes risk for estimating $P$ under $\ell_1$ loss by $R_B(S,n,\mu)$. If there exists a constant $A>0$ such that
  \begin{align}
    \mu\left\{P: \sum_{i=1}^Sp_i \le A\right\} = 1,
  \end{align}
  then the following inequality holds:
  \begin{align}
    R_P(S,n,\epsilon) \ge R_B(S,n,\mu) - 3A\cdot\mu\left(\mathcal{M}_S(\epsilon)^c\right).
  \end{align}
\end{lemma}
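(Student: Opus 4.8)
The plan is to first reduce to estimators whose output has $\ell_1$ norm at most $2A$, and then bound the minimax Poissonized risk over $\mathcal{M}_S(\epsilon)$ from below by a \emph{restricted} Bayes risk, paying only for the portion of $\mu$ that lies outside $\mathcal{M}_S(\epsilon)$.

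\emph{Step 1 (truncation).} Let $\hat{P}$ be an arbitrary estimator in the Poissonized experiment, and define $\tilde{P}(\mathbf{X}) \triangleq \hat{P}(\mathbf{X})$ if $\|\hat{P}(\mathbf{X})\|_1\le 2A$ and $\tilde{P}(\mathbf{X})\triangleq \mathbf{0}$ otherwise, so that $\|\tilde{P}\|_1\le 2A$ always. For any non-negative $P$ with $\|P\|_1\le A$ and any $\mathbf{X}$: if no truncation occurs then $\|\tilde{P}(\mathbf{X})-P\|_1=\|\hat{P}(\mathbf{X})-P\|_1$, while if truncation occurs then, by the triangle inequality, $\|\hat{P}(\mathbf{X})-P\|_1\ge\|\hat{P}(\mathbf{X})\|_1-\|P\|_1>2A-A=A\ge\|P\|_1=\|\tilde{P}(\mathbf{X})-P\|_1$. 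Hence $\bE_P\|\tilde{P}-P\|_1\le\bE_P\|\hat{P}-P\|_1$ for every such $P$. Since every $P$ in the support of $\mu$ satisfies $\|P\|_1\le A$, and (in the regime $A\ge 1+\epsilon$, which holds in every application in this paper, where $A=2$) every $P\in\mathcal{M}_S(\epsilon)$ satisfies $\|P\|_1<1+\epsilon\le A$, passing from $\hat{P}$ to $\tilde{P}$ increases neither the worst-case risk over $\mathcal{M}_S(\epsilon)$ nor the Bayes risk under $\mu$. Therefore, in evaluating $R_P(S,n,\epsilon)$ we may restrict to estimators satisfying $\|\hat{P}\|_1\le 2A$.

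\emph{Step 2 (worst case $\ge$ Bayes minus leakage).} Fix such an estimator. Viewing the restriction of $\mu$ to $\mathcal{M}_S(\epsilon)$ as a sub-probability measure concentrated on that set,
\begin{align*}
  \sup_{P\in\mathcal{M}_S(\epsilon)}\bE_P\|\hat{P}-P\|_1 &\ \ge\ \int_{\mathcal{M}_S(\epsilon)}\bE_P\|\hat{P}-P\|_1\,\mu(dP) \\
  &=\ \bE_{P\sim\mu}\bE_P\|\hat{P}-P\|_1 - \int_{\mathcal{M}_S(\epsilon)^c}\bE_P\|\hat{P}-P\|_1\,\mu(dP).
\end{align*}
The first term on the right is at least $R_B(S,n,\mu)$ by definition of the Bayes risk. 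For the second, on the support of $\mu$ we have $\bE_P\|\hat{P}-P\|_1\le\bE_P\|\hat{P}\|_1+\|P\|_1\le 2A+A=3A$, so the integral is at most $3A\,\mu(\mathcal{M}_S(\epsilon)^c)$. Combining and taking the infimum over all estimators (equivalently, by Step 1, over the truncated class) yields $R_P(S,n,\epsilon)\ge R_B(S,n,\mu)-3A\,\mu(\mathcal{M}_S(\epsilon)^c)$.

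\emph{Main obstacle.} Step 2 is routine; the only delicate point is Step 1, i.e.\ arguing that capping the estimator's $\ell_1$ norm costs nothing. Without such a reduction, a pathological estimator could place enormous mass on observation vectors that are negligible under every $P\in\mathcal{M}_S(\epsilon)$ yet not under the part of $\mu$ outside $\mathcal{M}_S(\epsilon)$, making the leakage integral $\int_{\mathcal{M}_S(\epsilon)^c}\bE_P\|\hat{P}-P\|_1\,\mu(dP)$ unbounded while the supremum on the left stays small. The pointwise comparison $\|\tilde{P}(\mathbf{X})-P\|_1\le\|\hat{P}(\mathbf{X})-P\|_1$ is exactly what rules this out, and it hinges on all relevant $P$ (those in $\mathrm{supp}(\mu)$ and in $\mathcal{M}_S(\epsilon)$) having $\ell_1$ norm at most $A$, i.e.\ on $A$ being at least roughly $1$.
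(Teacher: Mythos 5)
Your argument is correct in the regime you flag, and it takes a genuinely different route from the paper's. The paper conditions the prior on the good set: it forms $\pi(\cdot)=\mu(\cdot\cap\mathcal{M}_S(\epsilon))/\mu(\mathcal{M}_S(\epsilon))$, lower-bounds $R_P(S,n,\epsilon)$ by the Bayes risk under $\pi$, and then relates the two Bayes risks by evaluating the $\pi$-Bayes estimator $\hat{P}'$ under the full prior $\mu$, using the claim that the entries of $\hat{P}'(\mathbf{X})$ sum to at most $A$ to bound the contribution from $\mathcal{M}_S(\epsilon)^c$. You instead keep $\mu$ intact, reduce to estimators with $\|\hat{P}\|_1\le 2A$ by truncation, and bound the leakage integral over $\mathcal{M}_S(\epsilon)^c$ by $3A\,\mu(\mathcal{M}_S(\epsilon)^c)$ directly; both steps are sound. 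What your route buys is that it needs no structural fact about Bayes rules under $\ell_1$ loss: the paper's assertion that $\|\hat{P}'(\mathbf{X})\|_1\le A$ is not automatic, since the $\ell_1$-Bayes estimator is a vector of posterior medians and a sum of coordinatewise medians of a posterior supported on $\{\sum_i p_i\le A\}$ is in general only guaranteed to be at most $2A$ (it does hold for the product priors actually used in the paper); your truncation gives the needed norm control explicitly. What it costs is the side condition $A\ge 1+\epsilon$, which the lemma as stated does not assume: your pointwise comparison $\|\tilde{P}(\mathbf{X})-P\|_1\le\|\hat{P}(\mathbf{X})-P\|_1$ requires $\|P\|_1\le A$ for every $P\in\mathcal{M}_S(\epsilon)$, not just for $P$ in the support of $\mu$. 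This is a minor limitation rather than a fatal gap: the lemma is only invoked with $A=2$ and $\epsilon\le\zeta/(2(1+\zeta))\le 1/4$; if $A\le 1-\epsilon$ the support of $\mu$ is disjoint from $\mathcal{M}_S(\epsilon)$, so $\mu(\mathcal{M}_S(\epsilon)^c)=1$ and the claimed bound is vacuous because $R_B(S,n,\mu)\le A$; and in the remaining window $1-\epsilon<A<1+\epsilon$ your same argument with truncation level $2(1+\epsilon)$ yields the bound with $3A$ replaced by $A+2(1+\epsilon)\le 3A+4\epsilon$. To claim the lemma verbatim for all $A>0$ you should either add this short case discussion or state the (mild) extra hypothesis explicitly.
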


\section{Proof of Corollaries and Auxiliary Lemmas}\label{sec_app_B}
\subsection{Proof of Corollary \ref{cor_n_S}}
Consider the discrete distribution $P=(p_1,p_2,\cdots,p_S)$ with cardinality $S$, and we take $\theta=(p_1,p_2,\cdots,p_{S-1})$ to be the free parameter. By definition, we know that the Fisher information matrix is
\begin{align}
  I_{i,j}(\theta) = \bE_\theta\left[-\frac{\partial^2}{\partial \theta_i\partial \theta_j}\ln p(x|\theta)\right], \quad 1\le i,j\le S-1.
\end{align}

It is straightforward to obtain that
\begin{align}
  I_{i,j}(\theta) &= p_i\left[-\frac{\partial^2}{\partial p_i^2}\ln p_i\right]\delta_{i,j} \nonumber\\
  & \qquad + \left(1-\sum_{k=1}^{S-1}p_k\right)
\left[-\frac{\partial^2}{\partial p_i\partial p_j}\ln\left(1-\sum_{k=1}^{S-1}p_k\right)\right] \\
& = \frac{\delta_{i,j}}{p_i} + \frac{1}{p_S}.
\end{align}
where $\delta_{i,j}$ equals one if $i=j$ and zero otherwise. Hence, in matrix form we have
\begin{align}
  {\bf I}(\theta) = \bm{\Lambda} + \frac{1}{p_S}{\bf 1}{\bf 1}^{\bf T},
\end{align}
where $\bm{\Lambda}\triangleq \text{diag}(p_1^{-1},\cdots,p_{S-1}^{-1})$, and ${\bf 1}\triangleq (1,1,\cdots,1)^{\bf T}$ is a $(S-1)\times1$ column vector. According to the Woodbury matrix identity
\begin{align}
  ({\bf A} + {\bf UCV})^{-1} = {\bf A}^{-1} - {\bf A}^{-1}{\bf U}({\bf C}^{-1}+{\bf VA}^{-1}{\bf U})^{-1}{\bf V}{\bf A}^{-1},
\end{align}
we can take ${\bf A}=\bm{\Lambda}, {\bf U}={\bf 1}, {\bf C}=p_S^{-1}, {\bf V}={\bf 1}^{\bf T}$ to obtain
\begin{align}
  {\bf I}(\theta)^{-1} &= (\bm{\Lambda} + \frac{1}{p_S}{\bf 1}{\bf 1}^{\bf T})^{-1}\\
& = \bm{\Lambda}^{-1} - \bm{\Lambda}^{-1}{\bf 1}(p_S+{\bf 1}\bm{\Lambda}^{-1}{\bf 1}^{\bf T})^{-1}{\bf 1}^{\bf T}\bm{\Lambda}^{-1}\\
& = \bm{\Lambda}^{-1} - \bm{\Lambda}^{-1}{\bf 1}{\bf 1}^{\bf T}\bm{\Lambda}^{-1}.
\end{align}

After some algebra we can show that
\begin{align}
  [{\bf I}(\theta)^{-1}]_{i,j} = -p_ip_j + p_i\delta_{i,j},
\end{align}
then by choosing $l: \mathbb{R}^S\mapsto\mathbb{R}_+$ defined by $l({\bf X})\triangleq\sum_{i=1}^S|X_i|$ and $\psi((p_1,p_2,\cdots,p_{S-1}))=(p_1,p_2,\cdots,p_{S-1},1-\sum_{i=1}^{S-1}p_i)$ in Theorem \ref{lem_LAM}, for $\mathcal{L}({\bf X})=\mathcal{N}(0,\psi'(\theta){\bf I}(\theta)^{-1}\psi'(\theta)^T)$,
\begin{align}
  \mathbb{E}l({\bf X}) = \sqrt{\frac{2}{\pi}}\sum_{i=1}^{S} \sqrt{p_i(1-p_i)}.
\end{align}

If we choose $\theta=(1/S,1/S,\cdots,1/S)$, then for any estimator sequence $\{T_n\}_{n=1}^\infty$, Theorem \ref{lem_LAM} yields
\begin{align}
&\sup_I\liminf_{n\to\infty}\sup_{h\in I}\bE_{\theta+\frac{h}{\sqrt{n}}}\sqrt{n}\left\|T_n-\left(\theta+\frac{h}{\sqrt{n}}\right)\right\|_1\nonumber\\
&\qquad \ge\sqrt{\frac{2}{\pi}}\sum_{i=1}^{S} \sqrt{p_i(1-p_i)} = \sqrt{\frac{2(S-1)}{\pi}}
\end{align}
and the proof is completed by noticing that
\begin{align}
  &\liminf_{n\to\infty}\sqrt{n}\cdot \inf_{T_n}\sup_{P\in\mathcal{M}_S}\bE_P\left\|T_n-P\right\|_1 \nonumber\\
  &\qquad \ge \sup_I\liminf_{n\to\infty}\sup_{h\in I}\bE_{\theta+\frac{h}{\sqrt{n}}}\sqrt{n}\left\|T_n-\left(\theta+\frac{h}{\sqrt{n}}\right)\right\|_1.
\end{align}

\subsection{Proof of Lemma \ref{lem_relate}}
By the definition of the minimax risk under the Multinomial model, for any $\delta>0$, there exists an estimator $\hat{P}_M({\bf X},S,n)$ such that
\begin{align}
  \sup_{P\in\mathcal{M}_S} \mathbb{E}_P\|\hat{P}_M({\bf X},S,n)-P\|_1 < R(S,n) + \delta, \quad \forall n.
\end{align}
Now we construct a new estimator under the Poissonized model, i.e., we set $\hat{P}_P({\bf X},S)\triangleq \hat{P}_M({\bf X},S,n')$ where $n'=\sum_{i=1}^S X_i\sim\mathsf{Poi}(n\sum_{i=1}^Sp_i)$. Then we can obtain that for $0<\epsilon<\frac{\zeta}{2(1+\zeta)}$ and $\zeta\in(0,1)$,
\begin{align}
  &R_P(S,n,\epsilon) \le \sup_{P\in\mathcal{M}_S(\epsilon)} \mathbb{E}_P\|\hat{P}_P({\bf X},S)-P\|_1\\
  &\qquad = \sum_{m=0}^\infty\sup_{P\in\mathcal{M}_S(\epsilon)} \mathbb{E}_P\|\hat{P}_M({\bf X},S,m)-P\|_1\cdot\mathbb{P}(n'=m)\\
  &\qquad\le \sum_{m=0}^\infty\sup_{P\in\mathcal{M}_S(\epsilon)} \left(\mathbb{E}_P\left\|\hat{P}_M({\bf X},S,m)-\frac{P}{\sum_{i=1}^Sp_i}\right\|_1 \right.\nonumber\\
  &\qquad\qquad \left. + \left\|\frac{P}{\sum_{i=1}^Sp_i}-P\right\|_1\right)\cdot\mathbb{P}(n'=m)\\
  &\qquad\le \sum_{m=0}^\infty \left(R(S,m)+\delta+\epsilon\right)\cdot\mathbb{P}(n'=m)\\
  &\qquad\le \mathbb{P}(n'\le \frac{n}{1+\zeta}) + R(S,\frac{n}{1+\zeta}) + \epsilon + \delta\\\label{eq:relate}
  &\qquad\le \exp(-\frac{\zeta^2n}{24}) + R(S,\frac{n}{1+\zeta}) + \epsilon + \delta,
\end{align}
where we have used Lemma \ref{lem_chernoff} in the last step. The desired result follows directly from (\ref{eq:relate}) and the arbitrariness of $\delta$.

\subsection{Proof of Lemma \ref{lem_bayes}}
Denote the conditional prior $\pi$ by
\begin{align}
  \pi(A) = \frac{\mu(A\cap \mathcal{M}_S(\epsilon))}{\mu(\mathcal{M}_S(\epsilon))},
\end{align}
we consider the Bayes estimator $\hat{P}'$ under prior $\pi$ and the corresponding Bayes risk $R_B'(S,n,\pi)$. Due to our construction of the prior, we know that for all $\bf X$, the sum of all entries of $\hat{P}'({\bf X})$ will not exceed $A$ almost surely. Since $R_B(S,n,\mu)$ is the Bayes risk under prior $\mu$, applying $\hat{P}'$ yields
\begin{align}
  R_B(S,n,\mu) &\le \int \mathbb{E}_P\|\hat{P}'-P\|_1 \mu(dP)\\
  &= \int_{\mathcal{M}_S(\epsilon)}\mathbb{E}_P\|\hat{P}'-P\|_1 \mu(dP) \nonumber\\
  &\qquad + \int_{\mathcal{M}_S(\epsilon)^c} \mathbb{E}_P\|\hat{P}'-P\|_1 \mu(dP)\\
  &\le \frac{1}{\mu(\mathcal{M}_S(\epsilon))} \int_{\mathcal{M}_S(\epsilon)} \mathbb{E}_P\|\hat{P}'-P\|_1 \pi(dP)\nonumber\\
  &\qquad+ \int_{\mathcal{M}_S(\epsilon)^c}2A \mu(dP)\\ \label{eq:inequality_1}
  &= \frac{R_B'(S,n,\pi)}{\mu(\mathcal{M}_S(\epsilon))} + 2A\left(1-\mu(\mathcal{M}_S(\epsilon))\right).
\end{align}
Since the Bayes risk serves as a lower bound for the minimax risk, i.e., $R_P(S,n,\epsilon)\ge R_B'(S,n,\pi)$, we have
  \begin{align}
    R_P(S,n,\epsilon) \ge R_B(S,n,\mu) - (2A+R_B(S,n,\mu))\mu\left(\mathcal{M}_S(\epsilon)^c\right).
  \end{align}
Then the proof is completed by noticing that $R_B(S,n,\mu)\le A$, for the risk of the null estimator $\hat{P}(\bf X)={\bf 0}$ under prior $\mu$ does not exceed $A$.

\bibliographystyle{IEEEtran}
\bibliography{reference}

% Generated by IEEEtran.bst, version: 1.13 (2008/09/30)
\begin{thebibliography}{10}
\providecommand{\url}[1]{#1}
\csname url@samestyle\endcsname
\providecommand{\newblock}{\relax}
\providecommand{\bibinfo}[2]{#2}
\providecommand{\BIBentrySTDinterwordspacing}{\spaceskip=0pt\relax}
\providecommand{\BIBentryALTinterwordstretchfactor}{4}
\providecommand{\BIBentryALTinterwordspacing}{\spaceskip=\fontdimen2\font plus
\BIBentryALTinterwordstretchfactor\fontdimen3\font minus
  \fontdimen4\font\relax}
\providecommand{\BIBforeignlanguage}[2]{{%
\expandafter\ifx\csname l@#1\endcsname\relax
\typeout{** WARNING: IEEEtran.bst: No hyphenation pattern has been}%
\typeout{** loaded for the language `#1'. Using the pattern for}%
\typeout{** the default language instead.}%
\else
\language=\csname l@#1\endcsname
\fi
#2}}
\providecommand{\BIBdecl}{\relax}
\BIBdecl

\bibitem{Shannon1948}
C.~E. Shannon, ``A mathematical theory of communication,'' \emph{The Bell
  System Technical Journal}, vol.~27, pp. 379--423, 623--656, 1948.

\bibitem{Vandervaart2000}
A.~W. Van~der Vaart, \emph{Asymptotic Statistics}.\hskip 1em plus 0.5em minus
  0.4em\relax Cambridge university press, 2000, vol.~3.

\bibitem{Marton1994entropy}
K.~Marton and P.~C. Shields, ``Entropy and the consistent estimation of joint
  distributions,'' \emph{The Annals of Probability}, vol.~22, no.~2, pp.
  960--977, 1994.

\bibitem{Steinhaus1957problem}
H.~Steinhaus, ``The problem of estimation,'' \emph{The Annals of Mathematical
  Statistics}, vol.~28, pp. 633--648, 1957.

\bibitem{Trybula1958problem}
S.~Trybula, ``Some problems of simultaneous minimax estimation,'' \emph{The
  Annals of Mathematical Statistics}, vol.~29, pp. 245--253, 1958.

\bibitem{Rutkowska1977minimax}
M.~Rutkowska, ``Minimax estimation of the parameters of the multivariate
  hypergeometric and multinomial distributions,'' \emph{Zastos. Mat.}, vol.~16,
  pp. 9--21, 1977.

\bibitem{Olkin1979admissible}
I.~Olkin and M.~Sobel, ``Admissible and minimax estimation for the multinomial
  distribution and for independent binomial distributions,'' \emph{The Annals
  of Statistics}, vol.~7, pp. 284--290, 1979.

\bibitem{Lehmann1998theory}
E.~L. Lehmann and G.~Casella, \emph{Theory of Point Estimation}, 2nd~ed.\hskip
  1em plus 0.5em minus 0.4em\relax New York, NY: Springer-Verlag, 1998.

\bibitem{Devroye1985nonparametric}
L.~Devroye and L.~Gy\"orfi, \emph{Nonparametric Density Estimation: The $L_1$
  View}.\hskip 1em plus 0.5em minus 0.4em\relax New York, NY: John Wiley, 1985.

\bibitem{Diaconis1991closed}
P.~Diaconis and S.~Zabell, ``Closed form summation for classical distributions:
  Variations on a theme of {De Moivre},'' \emph{Statistical Science}, vol.~6,
  no.~3, pp. 284--302, 1991.

\bibitem{Berend2013sharp}
D.~Berend and A.~Kontorovich, ``A sharp estimate of the binomial mean absolute
  deviation with applications,'' \emph{Statistics \& Probability Letters},
  vol.~83, no.~4, pp. 1254 -- 1259, 2013.

\bibitem{kamath2015learning}
S.~Kamath, A.~Orlitsky, V.~Pichapati, and A.~T. Suresh, ``On learning
  distributions from their samples,'' in \emph{Proceedings of The 28th
  Conference on Learning Theory}, 2015, pp. 1066--1100.

\bibitem{Hajek1972local}
J.~H{\'a}jek, ``Local asymptotic minimax and admissibility in estimation,'' in
  \emph{Proceedings of the Sixth Berkeley Symposium on Mathematical Statistics
  and Probability}, vol.~1, 1972, pp. 175--194.

\bibitem{Diakonikolas2014beyond}
I.~Diakonikolas, ``Beyond histograms: Structure and distribution estimation,''
  in \emph{Workshop of the 46th ACM Symposium on Theory of Computing}, 2014.

\bibitem{Wald1950statistical}
A.~Wald, \emph{Statistical Decision Functions}.\hskip 1em plus 0.5em minus
  0.4em\relax Wiley, 1950.

\bibitem{banerjee2005clustering}
A.~Banerjee, S.~Merugu, I.~S. Dhillon, and J.~Ghosh, ``Clustering with bregman
  divergences,'' \emph{The Journal of Machine Learning Research}, vol.~6, pp.
  1705--1749, 2005.

\bibitem{Chan2014Efficient}
S.~Chan, I.~Diakonikolas, R.~A. Servedio, and X.~Sun, ``Efficient density
  estimation via piecewise polynomial approximation,'' in \emph{46th ACM
  Symposium on Theory of Computing}, 2014, pp. 604--613.

\bibitem{daskalakis2012learning}
C.~Daskalakis, I.~Diakonikolas, and R.~A. Servedio, ``Learning k-modal
  distributions via testing,'' in \emph{Proceedings of the Twenty-Third Annual
  ACM-SIAM Symposium on Discrete Algorithms}, 2012, pp. 1371--1385.

\bibitem{Cover1991information}
T.~M. Cover and J.~A. Thomas, \emph{Elements of Information Theory}.\hskip 1em
  plus 0.5em minus 0.4em\relax New York, NY, USA: Wiley-Interscience, 2006.

\bibitem{csiszar1981information}
I.~Csiszar and J.~K{\"o}rner, ``Information theory: Coding theorems for
  discrete memoryless channels,'' \emph{Budapest: Akad{\'e}miai Kiad{\'o}},
  1981.

\bibitem{Silve--Parada2012Shannon}
J.~F. Silva and P.~A. Parada, ``Shannon entropy convergence results in the
  countable infinite case,'' in \emph{Information Theory Proceedings (ISIT),
  2012 IEEE International Symposium on}, July 2012, pp. 155--159.

\bibitem{Antos2001convergence}
A.~Antos and I.~Kontoyiannis, ``Convergence properties of functional estimates
  for discrete distributions,'' \emph{Random Struct. Algorithms}, vol.~19, no.
  3-4, pp. 163--193, 2001.

\bibitem{Valiant--Valiant2011}
G.~Valiant and P.~Valiant, ``Estimating the unseen: an $n/\log n$-sample
  estimator for entropy and support size, shown optimal via new {CLT}s,'' in
  \emph{Proceedings of the 43rd ACM Symposium on Theory of Computing}, 2011,
  pp. 685--694.

\bibitem{jiao2015minimax}
J.~Jiao, K.~Venkat, Y.~Han, and T.~Weissman, ``Minimax estimation of
  functionals of discrete distributions,'' \emph{Information Theory, IEEE
  Transactions on}, vol.~61, no.~5, pp. 2835--2885, 2015.

\bibitem{Wu--Yang2014minimax}
Y.~Wu and P.~Yang, ``Minimax rates of entropy estimation on large alphabets via
  best polynomial approximation,'' \emph{available on arXiv}, 2014.

\bibitem{paninski2003estimation}
L.~Paninski, ``Estimation of entropy and mutual information,'' \emph{Neural
  Computation}, vol.~15, no.~6, pp. 1191--1253, 2003.

\bibitem{jiao2014nonasymptotic}
J.~Jiao, K.~Venkat, Y.~Han, and T.~Weissman, ``Maximum likelihood estimation of
  functionals of discrete distributions,'' \emph{submitted to IEEE Transactions
  on Information Theory}, 2015.

\bibitem{jiao2014beyond}
------, ``Beyond maximum likelihood: from theory to practice,'' \emph{available
  on arXiv}, 2014.

\bibitem{Valiant--Valiant2011power}
G.~Valiant and P.~Valiant, ``The power of linear estimators,'' in \emph{IEEE
  52nd Annual Symposium on Foundations of Computer Science}, Oct 2011, pp.
  403--412.

\bibitem{Algoet--Cover1988sandwich}
P.~H. Algoet and T.~M. Cover, ``A sandwich proof of the
  shannon-mcmillan-breiman theorem,'' \emph{The Annals of Probability},
  vol.~16, no.~2, pp. 899--909, 1988.

\bibitem{Donoho--Johnstone1994minimax}
D.~L. Donoho and J.~M. Johnstone, ``Minimax risk over $\ell_p$-balls for
  $\ell_q$-error,'' \emph{Probability Theory and Related Fields}, vol.~99, pp.
  277--303, 1994.

\bibitem{Stein1956inadmissibility}
C.~Stein, ``Inadmissibility of the usual estimator for the mean of a
  multivariate normal distribution,'' in \emph{Proceedings of the Third
  Berkeley Symposium on Mathematical Statistics and Probability}, vol.~1, no.
  399, 1956, pp. 197--206.

\bibitem{Donoho--Johnstone1994ideal}
D.~L. Donoho and J.~M. Johnstone, ``Ideal spatial adaptation by wavelet
  shrinkage,'' \emph{Biometrika}, vol.~81, no.~3, pp. 425--455, 1994.

\bibitem{cai2012}
T.~T. Cai, ``Minimax and adaptive inference in nonparametric function
  estimation,'' \emph{Statistical Science}, vol.~27, no.~1, pp. 31--50, 2012.

\bibitem{Adell--Jodra2006Exact}
J.~A. Adell and P.~Jodra, ``Exact {K}olmogorov and total variation distances
  between some familiar discrete distributions,'' \emph{Journal of Inequalities
  and Applications}, vol. 2006, no.~1, pp. 1--8, 2006.

\bibitem{mitzenmacher2005probability}
M.~Mitzenmacher and E.~Upfal, \emph{Probability and Computing: Randomized
  Algorithms and Probabilistic Analysis}.\hskip 1em plus 0.5em minus
  0.4em\relax Cambridge University Press, 2005.

\bibitem{Hoeffding1963probability}
W.~Hoeffding, ``Probability inequalities for sums of bounded random
  variables,'' \emph{Journal of the American Statistical Association}, vol.~58,
  no. 301, pp. 13--30, 1963.

\end{thebibliography}

\begin{IEEEbiographynophoto}{Yanjun Han}
(S'14) received his B.Eng. degree with the highest honor in Electronic Engineering from Tsinghua University, Beijing, China in 2015. He is currently working towards the Ph.D. degree in the Department of Electrical Engineering at Stanford University. His research interests include information theory and statistics, with applications in communications, data compression, and learning.
\end{IEEEbiographynophoto}

\begin{IEEEbiographynophoto}{Jiantao Jiao}
(S'13) received his B.Eng. degree with the highest honor in Electronic Engineering from Tsinghua University, Beijing, China in 2012, and a Master's degree in Electrical Engineering from Stanford University in 2014. He is currently working towards the Ph.D. degree in the Department of Electrical Engineering at Stanford University. He is a recipient of the Stanford Graduate Fellowship (SGF). His research interests include information theory and statistical signal processing, with applications in communication, control,
computation, networking, data compression, and learning.
\end{IEEEbiographynophoto}

\begin{IEEEbiographynophoto}{Tsachy Weissman}
(S'99-M'02-SM'07-F'13) graduated summa cum laude with a B.Sc. in electrical engineering from the Technion in 1997, and earned his Ph.D. at the same place in 2001. He then worked at Hewlett Packard Laboratories with the information theory group until 2003, when he joined Stanford University, where he is currently Professor of Electrical Engineering and incumbent of the STMicroelectronics chair in the School of Engineering. He has spent leaves at the Technion, and at ETH Zurich.

Tsachy's research is focused on information theory, compression, communication, statistical signal processing, the interplay between them, and their applications. He is recipient of several best paper awards, and prizes for excellence in research and teaching. He served on the editorial board of the \textsc{IEEE Transactions on Information Theory} from Sept. 2010 to Aug. 2013, and currently serves on the editorial board of Foundations and Trends in Communications and Information Theory. He is Founding Director of the Stanford Compression Forum.
\end{IEEEbiographynophoto}
\end{document}